\newcommand \cG{\mathcal{G}}
\newcommand \cH{\mathcal{H}}
\newcommand \cI{\mathcal{I}}
\newcommand \cL{\mathcal{L}}
\newcommand \cB{\mathcal{B}}
\newcommand \cN{\mathcal{N}}
\newcommand \cS{\mathcal{S}}
\newcommand \cT{\mathcal{T}}
\newcommand \cV{\mathcal{V}}
\newcommand \cX{\mathcal{X}}
\newcommand \cZ{\mathcal{Z}}
\newcommand \cP{\mathcal{P}}
\newcommand\ccalO{\mathcal{O}}
\newcommand \bzero{\mathbf{0}}
\newcommand \bone{\mathbf{1}}
\newcommand \bc{\mathbf{c}}
\newcommand \be{\mathbf{e}}
\newcommand \bh{\mathbf{h}}
\newcommand \bi{\mathbf{i}}
\newcommand \bo{\mathbf{o}}
\newcommand \bp{\mathbf{p}}
\newcommand \bq{\mathbf{q}}
\newcommand \br{\mathbf{r}}
\newcommand \bs{\mathbf{s}}
\newcommand \bv{\mathbf{v}}
\newcommand \bx{\mathbf{x}}
\newcommand \bz{\mathbf{z}}
\newcommand \bA{\mathbf{A}}
\newcommand \bB{\mathbf{B}}
\newcommand \bG{\mathbf{G}}
\newcommand \bH{\mathbf{H}}
\newcommand \bI{\mathbf{I}}
\newcommand \bP{\mathbf{P}}
\newcommand \bX{\mathbf{X}}
\newcommand \bY{\mathbf{Y}}
\newcommand \bV{\mathbf{V}}
\newcommand \bbv{\bar{\mathbf{v}}}
\newcommand \bbH{\bar{\mathbf{H}}}
\newcommand \hbv{\hat{\mathbf{v}}}
\newcommand \hbo{\hat{\mathbf{o}}}
\newcommand \bepsilon{\boldsymbol{\epsilon}}
\newcommand \blambda{\boldsymbol{\lambda}}
\newcommand \bomega{\boldsymbol{\omega}}
\newcommand \bmu{\boldsymbol{\mu}}
\newcommand \bSigma{\boldsymbol{\Sigma}}
\newcommand \bchi{\boldsymbol{\chi}}
\newcommand \bzeta{\boldsymbol{\zeta}}
\newcommand \bPhi{\boldsymbol{\Phi}}
\newcommand \bvarepsilon{\boldsymbol{\varepsilon}}
\newcommand{\minimize}{{\rm minimize}}
\newcommand{\maximize}{{\rm maximize}}
\DeclareMathOperator{\rank}{rank}
\DeclareMathOperator{\range}{range}
\DeclareMathOperator{\nullspace}{null}
\DeclareMathOperator{\diag}{dg}
\DeclareMathOperator{\trace}{Tr}
\DeclareMathOperator{\subjectto}{s.to}
\newtheorem{proposition}{Proposition}
\newtheorem{definition}{Definition}
\begin{document}
\title{PSSE Redux: Convex Relaxation, Decentralized, Robust, and Dynamic Approaches}

\author{
	Vassilis Kekatos, Gang Wang, Hao Zhu, and Georgios B. Giannakis 
	\thanks{
	V. Kekatos is with the Bradley Department of Electrical and Computer Engineering,
	Virginia Tech, Blacksburg, VA 24061, USA. 
		G. Wang is with the Department of Electrical and Computer Engineering, University of Minnesota, Minneapolis, MN 55455, USA, and also with the State Key Lab of Intelligent Control and Decision of Complex Systems, Beijing Institute of Technology, Beijing 100081, P. R. China. 	
		H. Zhu is with the Department of
		Electrical and Computer Engineering, University of Illinois, Urbana, IL 61801.
		G. B. Giannakis is with the Department of Electrical and Computer Engineering, University of Minnesota, Minneapolis, MN 55455, USA.
 E-mails: kekatos@vt.edu; \{gangwang,\,georgios\}@umn.edu; haozhu@illinois.edu.}
}

\maketitle

\allowdisplaybreaks

%%%%%%%%%%%%%%%%%  INTRO  %%%%%%%%%%%%%%%%%%%%%%%
\section{Introduction}\label{sec:intro}
With the advent of digital computers, power system engineers in the 1960s tried computing the voltages at critical buses based on readings from current and potential transformers. Local personnel manually collected these readings and forwarded them by phone to a control center. Nevertheless, due to timing, modeling, and instrumentation inaccuracies, the power flow equations were always infeasible. In a seminal contribution~\cite{Schweppe70}, the statistical foundations were laid for a multitude of grid monitoring tasks, including topology detection, static state estimation, exact and linearized models, bad data analysis, centralized and decentralized implementations, as well as dynamic state tracking. Since then, different chapters, books, and review articles have nicely outlined the progress in the area; see for example~\cite{WollenbergBook,Mo00,AburExpositoBook}. The revolutionary monitoring capabilities enabled by synchrophasor units have been put forth in~\cite{PhTh08}.

This chapter aspires to glean some of the recent advances in power system state estimation (PSSE), though our collection is not exhaustive by any means. The Cram{\'e}r-Rao bound, a lower bound on the (co)variance of any unbiased estimator, is first derived for the PSSE setup. After reviewing the classical Gauss-Newton iterations, contemporary PSSE solvers leveraging relaxations to convex programs and successive convex approximations are explored. A disciplined paradigm for distributed and decentralized schemes is subsequently exemplified under linear(ized) and exact grid models. Novel bad data processing models and fresh perspectives linking critical measurements to cyber-attacks on the state estimator are presented. Finally, spurred by advances in online convex optimization, model-free and model-based state trackers are reviewed.

\textbf{Notation:} Lower- (upper-) case boldface letters denote column vectors (matrices), and calligraphic letters stand for sets. Vectors $\bzero$, $\bone$, and $\be_n$ denote respectively the all-zero, all-one, and the $n$-th canonical vectors of suitable dimensions. 
The conjugate of a complex-valued object (scalar, vector or matrix) $x$ is denoted by $x^\ast$; 
$\Re\{x\}$ and $\Im\{x\}$ are its real and imaginary parts, and $j:=\sqrt{-1}$. Superscripts $^\mathcal{T}$ and $^\mathcal{H}$ stand for transpose and conjugate-transpose, respectively, while $\trace(\bX)$ is the trace of matrix $\bX$. A diagonal matrix having vector $\bx$ on its main diagonal is denoted by $\diag(\bx)$; whereas, the vector of diagonal entries of $\bX$ is $\diag(\bX)$. The range space of $\bX$ is denoted by $\range(\bX)$; and its null space (kernel) by $\nullspace(\bX)$. The notation $\mathcal{N}(\bmu,\bSigma)$ represents the Gaussian distribution with mean $\bmu$ and covariance matrix $\bSigma$.
 
%%%%%%%%%%%%%%%%%% sec:model  %%%%%%%%%%%%%%%%%%%%%
\section{Power Grid Modeling}\label{sec:model}
This section introduces notation and briefly reviews the power flow equations; for detailed exposition see e.g., \cite{AburExpositoBook}, \cite{SPM2013} and references therein. A power system can be represented by the graph $\cG=(\cB,\cL)$, where the node set $\cB$ comprises its $N_b$ buses, and the edge set $\cL$ its $N_l$ transmission lines. Given the focus on alternating current (AC) power systems, steady-state voltages and currents are represented by their single-phase equivalent phasors per unit.

%%%% Pi model with transformer
%\begin{figure}
%\centering
%\includegraphics[width=0.95\linewidth]{fig1.eps}
%\vspace*{-0.5em}
%\caption{Equivalent $\pi$ model for a transmission line; yellow box
%when an ideal transformer is also present [cf. \eqref{eq:i_nk2}].}
%\label{fig:pi_nodel}
%\vspace*{-1em}
%\end{figure}

A transmission line $(n,k)\in\cL$ running across buses $n,k\in\cB$ is modeled by its total series admittance $y_{nk}=g_{nk}+j b_{nk}$, and total shunt susceptance $j b_{nk}^s$. If $\cV_n$ is the complex voltage at bus $n$, the current $\cI_{nk}$ flowing from bus $m$ to bus $n$ over line $(m,n)$ is
\begin{equation}\label{eq:Ink}
\cI_{nk}=(y_{nk}+j b_{nk}^s/2)\cV_n-y_{nk}\cV_n.
\end{equation}
The current $\cI_{nm}$ coming from the other end of the line can be expressed symmetrically. That is not the case if the two buses are connected via a transformer with complex ratio $\rho_{nk}$ followed by a line, where
\begin{subequations}\label{eq:t}
\begin{align}
\cI_{nk}&=\frac{y_{nk}+j b_{nk}^s/2}{|\rho_{nk}|^2}\cV_n-\frac{y_{nk}}{\rho_{nk}^\ast}\cV_n\label{eq:t:Ink}\\
\cI_{nm}&=(y_{nk}+j b_{nk}^s/2)\cV_n-\frac{y_{nk}}{\rho_{nk}}\cV_n.\label{eq:t:Inm}
\end{align}
\end{subequations}

Kirchhoff's current law dictates that the current injected into bus $n$ is $\cI_n=\sum_{n\in\cB_n}\cI_{nk}$, where $\cB_n$ denotes the set of buses directly connected to bus $n$. If vector $\bi\in\mathbb{C}^{N_b}$ collects all nodal currents, and $\bv\in\mathbb{C}^{N_b}$ all nodal voltages, the two vectors are linearly related through the bus admittance matrix $\bY=\bG+j\bB$ as
\begin{equation}\label{eq:iYv}
\bi=\bY\bv.
\end{equation}
Similar to \eqref{eq:iYv}, line currents can be stacked in the $2N_l$-dimensional vector $\bi_f$, and expressed as a linear function of nodal voltages
\begin{equation}\label{eq:ifYfv}
\bi_f=\bY_f\bv
\end{equation}
for some properly defined $2N_l\times N_b$ complex matrix $\bY_f$ [cf.~\eqref{eq:Ink}--\eqref{eq:t}].

The complex power injected into bus $n$ will be denoted by $\cS_n:=P_n+jQ_n$. Since by definition $\cS_n=\cV_n\cI_n^\ast$, the vector of complex power injections $\bs=\bp+j\bq$ can be expressed as
\begin{equation}\label{eq:svYv}
\bs=\diag(\bv)\bi^\ast=\diag(\bv)\bY^\ast\bv^\ast.
\end{equation}
The power flowing from bus $n$ to bus $n$ over line $(m,n)$ is $\cS_{nk}=\cV_n\cI_{nk}^\ast$.

If voltages are expressed in polar form $\mathcal{V}_n=V_n e^{j\theta_n}$, the power flow equations in \eqref{eq:svYv} per real and imaginary entry can be written as
\begin{subequations}\label{eq:power flowpolar}
\begin{align}
P_n&=\sum_{n=1}^{N_b} V_nV_n\left[ G_{nk}\cos (\theta_n-\theta_n) +
B_{nk}\sin(\theta_n-\theta_n) \right]\label{eq:power flowpolar:p}\\
Q_n&=\sum_{n=1}^{N_b} V_nV_n\left[ G_{nk}\sin(\theta_n-\theta_n) -
B_{nk}\cos(\theta_n-\theta_n) \right].\label{eq:power flowpolar:q}
\end{align}
\end{subequations}
Since power injections are invariant if voltages are shifted by a common angle, the voltage phase is arbitrarily set to zero at a particular bus called the reference bus.

Alternatively to \eqref{eq:power flowpolar}, if voltages are expressed in rectangular coordinates $\cV_n=V_{r,m}+jV_{i,m}$, power injections are quadratically related to voltages
\begin{subequations}\label{eq:power flowrect}
\begin{align}
P_n&=V_{r,m}\sum_{n=1}^{N_b} (V_{r,n}G_{nk} - V_{i,n}B_{nk}) + V_{i,m}\sum_{n=1}^{N_b} (V_{i,n}G_{nk} + V_{r,n}B_{nk})\label{eq:power flowrect:p}\\
Q_n&=V_{i,m}\sum_{n=1}^{N_b} (V_{r,n}G_{nk} - V_{i,n}B_{nk}) - V_{r,m}\sum_{n=1}^{N_b} (V_{i,n}G_{nk} + V_{r,n}B_{nk}).\label{eq:power flowrect:q}
\end{align}
\end{subequations}

To compactly express \eqref{eq:power flowrect}, observe that $\cS_n^\ast=\cV_n^\ast\cI_n=(\bv^{\cH}\be_n)(\be_n^{\cT}\bi)=\bv^{\cH}\be_n\be_n^{\cT}\bY\bv$ from which it readily follows that
\begin{subequations}\label{eq:power flow}
\begin{align}
P_n&=\bv^{\cH}\bH_{P_n}\bv\label{eq:power flow:p}\\
Q_n&=\bv^{\cH}\bH_{Q_n}\bv\label{eq:power flow:q}
\end{align}
\end{subequations}
where the involved matrices are defined as
\begin{subequations}\label{eq:Hmat}
\begin{align}
\bH_{P_n}&:=\frac{1}{2}\left(\be_n\be_n^{\cT}\bY+ \bY^{\cH}\be_n\be_n^{\cT}\right)\label{eq:Hmat:p}\\
\bH_{Q_n}&:=\frac{1}{2j}\left(\be_n\be_n^{\cT}\bY- \bY^{\cH}\be_n\be_n^{\cT}\right).\label{eq:Hmat:q}
\end{align}
\end{subequations}
Similar expressions hold for the squared voltage magnitude at bus $n$:
\begin{equation}\label{eq:Vm}
V_n^2 =\bv^{\cH}\bH_{V_n}\bv,~~\text{where}~~\bH_{V_n}:=\be_n\be_n^{\cT}.
\end{equation} 
 
Realizing that a line current can also be provided as $\cI_{nk}=\be_{nk}^{\cT}\bi_f$, the power flow on line $(n,k)$ as seen from bus $n$ is expressed as $\cS_{nk}^\ast=\cV_n^\ast\cI_{nk}=(\bv^{\cH}\be_n)(\be_{nk}^{\cT}\bi_f)=\bv^{\cH}\be_n\be_{nk}^{\cT}\bY_f\bv$, from which it follows that
\begin{subequations}\label{eq:power flowl}
\begin{align}
P_{nk}&=\bv^{\cH}\bH_{P_{nk}}\bv\label{eq:power flow:pf}\\
Q_{nk}&=\bv^{\cH}\bH_{Q_{nk}}\bv\label{eq:power flow:qf}
\end{align}
\end{subequations}
where $\bH_{P_{nk}}$ and $\bH_{Q_{nk}}$ are defined by substituting $\be_n^{\cT}\bY$ and $ \bY^{\cH}\be_n$ by $\be_{nk}^{\cT}\bY_f$ and $\bY_f^{\cH}\be_{nk}$ in \eqref{eq:Hmat}, accordingly.

Equations \eqref{eq:power flow}, \eqref{eq:Vm} and \eqref{eq:power flowl} explain how power injections, flows, and squared voltage magnitudes are quadratic functions of voltage phasors as described by $\bv^{\cH}\bH_m\bv$ for certain complex $N_b\times N_b$ matrices $\bH_m$. Regardless if $\bY$ and/or $\bY_f$ are symmetric or Hermitian, $\bH_m$ are Hermitian by definition. This means that $\bH_m=\bH_m^{\cH}$, or equivalently, $\Re\{\bH_m\}^{\cT}=\Re\{\bH_m\}$, and $\Im\{\bH_m\}^{\cT}=-\Im\{\bH_m\}$. It can be easily verified that the quadratic functions can be expressed in terms of real-valued quantities as
\begin{equation}\label{eq:zr}
\bv^{\cH}\bH_m\bv=\bbv^{\cT}\bbH_m\bbv
\end{equation}
for the expanded real-valued voltage vector $\bbv:=[\Re\{\bv\}^{\cT}~\Im\{\bv\}^{\cT}]^{\cT}$, and the real-valued counterpart of $\bH_m$, namely
\begin{align}\label{eq:Hr}
\bbH_m:=\left[\begin{array}{cc}
\Re\{\bH_m\}& -\Im\{\bH_m\}\\ 
\Im\{\bH_m\} & \Re\{\bH_m\}
\end{array}\right].
\end{align}

%%%%%%%%%%%%%%%%%% sec:problem  %%%%%%%%%%%%%%%%%%%%%
\section{Problem Statement}\label{sec:problem}
It was seen in Section~\ref{sec:model} that given grid parameters collected in $\bY$ and $\bY_f$, all power system quantities can be expressed in terms of the voltage vector $\bv$, which justifies its term as the system state. Meters installed across the grid measure electric quantities, and forward their readings via remote terminal units to a control center for grid monitoring. Due to lack of synchronization, conventional meters cannot utilize the angle information of phasorial quantities. For this reason, legacy measurements involve phaseless power injections and flows along with voltage and current magnitudes at specific buses. The advent of the global positioning system (GPS) facilitated a precise timing signal across large geographical areas, thus enabling the revolutionary technology of synchrophasors or phasor measurement units (PMUs)~\cite{PhTh08}. Recovering bus voltages given network parameters and the available measurements constitutes the critical task of power system state estimation. This section formally states the problem, provides the Cram{\'e}r-Rao bound on the variance of any unbiased estimator, and reviews the Gauss-Newton iterations. Solvers based on semidefinite relaxation and successive convex approximations are subsequently explicated, and the section is wrapped up with issues germane to PMUs.

%------------------- sec:problem:WLS ----------------------
\subsection{Weighted Least-Squares Formulation}\label{subsec:problem:WLS}
Consider $M$ real-valued measurements $\{z_m\}_{m=1}^M$ related to the complex power system state $\bv$ through the model
\begin{equation}\label{eq:WLS1}
z_m=h_m(\bv)+\epsilon_m
\end{equation}
where $h_m(\bv):\mathbb{C}^{N_b}\rightarrow\mathbb{R}$ is a (non)-linear function of $\bv$, and $\epsilon_m$ captures the measurement noise and modeling inaccuracies. Collecting measurements and noise terms in vectors $\bz$ and $\bepsilon$ accordingly, the vector form of \eqref{eq:WLS1} reads
\begin{equation}\label{eq:WLS2}
\bz=\bh(\bv)+\bepsilon
\end{equation}
for the mapping $\bh:\mathbb{C}^{N_b}\rightarrow\mathbb{R}^{M}$. Model \eqref{eq:WLS2} is instantiated for different types of measurements next.

Traditionally, the system state $\bv$ is expressed in polar coordinates, namely nodal voltage magnitudes and angles. Then $\bh(\bv)$ maps the $2N_b$-dimensional state vector to SCADA measurements through the nonlinear equations \eqref{eq:power flowpolar}. Expressing the states in polar form has been employed primarily due to two reasons. First, the Jacobian matrix of $\bh(\bv)$ is amenable to approximations. Secondly, voltage magnitude measurements are directly related to states. Nevertheless, due to recent computational reformulations, most of our exposition models voltages in the rectangular form. Then, as detailed in \eqref{eq:zr}, the $m$-th SCADA measurement $z_m$ involves the quadratic function of the state $h_m(\bv)=\bv^{\cH}\bH_m\bv$ for a Hermitian matrix $\bH_m$. 

Expressing voltages in rectangular coordinates is computationally advantageous when it comes to synchrophasors too. As evidenced by \eqref{eq:iYv}--\eqref{eq:ifYfv}, PMU measurements feature \emph{linear mappings} $h_m(\bv)$. If PMU measurements are expressed in rectangular coordinates, the model in \eqref{eq:WLS1} simplifies to
\begin{equation}\label{eq:WLS2.5}
\bz=\bH\bv+\bepsilon
\end{equation}
for an $M\times N_b$ complex matrix $\bH$, and complex-valued noise $\bepsilon$. Following the notation of \eqref{eq:zr}--\eqref{eq:Hr}, the linear measurement model of \eqref{eq:WLS2.5} can be expressed in terms of real-valued quantities as
\begin{equation}\label{eq:WLS2.6}
\bar{\bz}=\overline{(\bH^\ast)}\bar{\bv}+\bar{\bepsilon}.
\end{equation}

The random noise vector $\bepsilon$ in \eqref{eq:WLS2} is usually assumed independent of $\bh(\bv)$, zero-mean and \emph{circularly symmetric}, that is $\mathbb{E}[\bepsilon\bepsilon^{\cH}]=\bSigma_\epsilon$ and $\mathbb{E}[\bepsilon\bepsilon^{\cT}]=\bzero$. The last assumption holds if for example the real and imaginary components of $\bepsilon$ are independent and have identical covariance matrices. This is true for a PMU measurement, where the actual state lies at the center of a \emph{spherically-shaped} noise cloud on the complex plane.

Moreover, the entries of $\bepsilon$ are oftentimes assumed uncorrelated yielding a diagonal covariance $\bSigma_\epsilon=\diag(\{\sigma_m^2\})$ with $\sigma_m^2$ being the variance of the $m$-th entry $\epsilon_m$. However, that may not always be the case. For example, active and reactive powers at the same grid location are derived as products between the readings of a current transformer and a potential transformer. Further, noise terms may be correlated between the real and imaginary parts of the same phasor in a PMU.

Adopting the weighted least-squares (WLS) criterion, power system state estimation can be formulated as
\begin{equation}\label{eq:WLS3}
\underset{\bv\in\mathbb{C}^{N_b}}{\minimize}~\|\bSigma_\epsilon^{-1/2}(\bz-\bh(\bv))\|_2^2
\end{equation}
where $\bSigma_\epsilon^{-1/2}$ is the matrix square root of the inverse noise covariance matrix. If the noise is independent across measurements, then \eqref{eq:WLS3} simplifies to
\begin{equation}\label{eq:WLS4}
\underset{\bv}{\minimize}~\sum_{m=1}^M \frac{(z_m-h_m(\bv))^2}{\sigma_m^2}.
\end{equation}
Either way, the PSSE task boils down to a (non)-linear least-squares (LS) fit. When the mapping $\bh(\bv)$ is linear or when the entries of $\bepsilon$ are uncorrelated, the measurement model in \eqref{eq:WLS2} can be prewhitened. For example, the linear measurement model $\bz=\bH\bv+\bepsilon$ can be equivalently transformed to
\begin{equation}\label{eq:prewhitened}
\bSigma_\epsilon^{-1/2}\bz = (\bSigma_\epsilon^{-1/2}\bH)\bv + \bSigma_\epsilon^{-1/2}\bepsilon
\end{equation}
so that the associated noise $\bSigma_\epsilon^{-1/2}\bepsilon$ is now uncorrelated. To ease the presentation, the noise covariance will be henceforth assumed $\bSigma_\epsilon = \bI_M$, yielding
\begin{equation}\label{eq:WLS5}
\hat{\bv}:=\arg\min_{\bv}~\sum_{m=1}^M (z_m-h_m(\bv))^2.
\end{equation}
For Gaussian measurement noise $\bepsilon\sim \cN(\bzero,\bI_M)$, the minimizer of \eqref{eq:WLS5} coincides with the maximum likelihood estimate (MLE) of $\bv$~\cite{kay93book}.

%------------------- sec:problem:CRLB ----------------------
\subsection{Cram{\'e}r-Rao Lower Bound Analysis}\label{subsec:problem:CRLB}
According to standard results in estimation theory~\cite{kay93book}, the variance of any \emph{unbiased} estimator is lower bounded by the Cram{\'e}r-Rao lower bound (CRLB). Appreciating its importance as a performance benchmark across different estimators, the ensuing result shown in the Appendix derives the CRLB for any unbiased power system state estimator based on the so-termed \emph{Wirtinger's calculus} for complex analysis \cite{wirtinger}.

\begin{proposition}\label{prop:crlb}
	Consider estimating the unknown state vector $\mathbf{v}\in\mathbb{C}^{N_b}$ from the noisy SCADA data $\{z_m\}_{m=1}^M$ of \eqref{eq:WLS2}, where the Gaussian measurement error $\epsilon_m$ is independent across meters with mean zero and variance $\sigma_m^2$. The covariance matrix of any unbiased estimator $\hat{\mathbf{v}}$ satisfies
	\begin{equation}\label{eq:crlb}
	{\rm Cov}(\hat{\mathbf{v}})\succeq [\mathbf{F}^\dagger(\bv,\bv^\ast)]_{1:{N_b},1:{N_b}}
	\end{equation}
	where the Fisher information matrix (FIM) is given as
	\begin{equation}\label{eq:fim}
	\mathbf{F}(\bv,\bv^\ast){=}\left[\begin{array}{cc}
	\sum_{m=1}^M \tfrac{1}{\sigma_m^2} (\mathbf{H}_m\mathbf{v})(\mathbf{H}_m\mathbf{v})^\mathcal{H} &\sum_{m=1}^M \tfrac{1}{\sigma_m^{2}} (\mathbf{H}_m\mathbf{v})({\mathbf{H}}_m^\ast{\mathbf{v}^\ast})^\mathcal{H}\\
	\sum_{m=1}^M \tfrac{1}{\sigma_m^2}({\mathbf{H}}_m^\ast{\mathbf{v}^\ast})(\mathbf{H}_m\mathbf{v})^\mathcal{H}&
	\sum_{m=1}^M\tfrac{1}{\sigma_m^2}(\mathbf{H}^\ast_m{\mathbf{v}}^\ast)({\mathbf{H}}^\ast_m\mathbf{v}^\ast)^\mathcal{H}
	\end{array}
	\right].
	\end{equation}
	In addition, matrix $\mathbf{F}(\bv,\bv^\ast)$ has at least rank-one deficiency even when all possible SCADA measurements are available.
\end{proposition}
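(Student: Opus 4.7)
The approach is to treat $\bv$ and $\bv^{\ast}$ as formally independent variables via Wirtinger's calculus, derive the Fisher information matrix for the augmented state $[\bv^{\cT}\;\bv^{\cH}]^{\cT}$ from the Gaussian log-likelihood, invoke the complex-parameter CRLB, and finally exhibit a nontrivial null-space direction of $\mathbf{F}$ by exploiting the common-phase invariance of quadratic-Hermitian measurements.

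\textbf{Score and FIM assembly.} Under $\epsilon_m\sim\cN(0,\sigma_m^2)$, the log-likelihood reads $L(\bv,\bv^{\ast}) = -\tfrac{1}{2}\sum_m \sigma_m^{-2}(z_m - \bv^{\cH}\bH_m\bv)^2 + \text{const.}$ Since $\bH_m = \bH_m^{\cH}$, the Wirtinger derivatives of $h_m = \bv^{\cH}\bH_m\bv$ come out to $\partial h_m/\partial \bv^{\ast} = \bH_m\bv$ and $\partial h_m/\partial \bv = \bH_m^{\ast}\bv^{\ast}$ (the latter using $\bH_m^{\cT}=\bH_m^{\ast}$). Hence the score $\partial L/\partial [\bv^{\cT}\;\bv^{\cH}]^{\cT}$ is linear in the residuals $\epsilon_m = z_m - h_m$. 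Taking expectations of the outer product of the score with itself, and invoking $\mathbb{E}[\epsilon_m\epsilon_k] = \sigma_m^2\delta_{mk}$, immediately delivers the four blocks in \eqref{eq:fim}. Applying the Wirtinger-form CRLB to the augmented state yields a bound on its $2N_b\times 2N_b$ augmented covariance, from which ${\rm Cov}(\hat{\bv}) \succeq [\mathbf{F}^{\dagger}(\bv,\bv^{\ast})]_{1:N_b,1:N_b}$ falls out by reading off the leading $N_b\times N_b$ block; the pseudoinverse is mandatory because, as shown next, $\mathbf{F}$ is singular.

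\textbf{Rank deficiency.} Every measurement function $h_m(\bv) = \bv^{\cH}\bH_m\bv$ is invariant under the global phase rotation $\bv\mapsto e^{j\alpha}\bv$ for any $\alpha\in\mathbb{R}$. Differentiating this identity at $\alpha=0$ picks out the tangent direction $\tilde{\bv}_{\perp} := [\,j\bv^{\cT},\,-j\bv^{\cH}\,]^{\cT}$ along which the likelihood is flat. A direct check shows $\mathbf{F}\tilde{\bv}_{\perp} = \bzero$: the $m$-th summand of the top block reduces to $\bH_m\bv\,\bigl(\bv^{\cH}\bH_m\bv - \bv^{\cT}\bH_m^{\ast}\bv^{\ast}\bigr)$, which vanishes because $\bv^{\cH}\bH_m\bv$ is real and therefore equals its conjugate $\bv^{\cT}\bH_m^{\ast}\bv^{\ast}$; the bottom block cancels identically. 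Hence $\rank(\mathbf{F}) \le 2N_b - 1$ regardless of how many SCADA measurements are stacked, establishing the claimed rank-one deficiency even in the ``full measurement'' regime.

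\textbf{Main obstacle.} The substantive work is really just bookkeeping --- conjugation, transposition, Hermitian symmetries --- which is mechanical once the Wirtinger conventions are fixed. The one place that requires care is the formal setup of the complex-parameter CRLB: one must use the \emph{augmented} representation so that the real degrees of freedom of $\bv$ are faithfully captured, and then justify identifying ${\rm Cov}(\hat{\bv})$ with the leading $N_b\times N_b$ block of $\mathbf{F}^{\dagger}$. Once the formalism is in place the rank-deficiency statement is essentially a symmetry observation and drops out almost for free.
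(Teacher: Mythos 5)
Your proposal is correct and follows essentially the same route as the paper's appendix: Wirtinger calculus on the conjugate coordinates $[\mathbf{v}^\mathcal{T}~(\mathbf{v}^\ast)^\mathcal{T}]^\mathcal{T}$ to assemble the FIM blocks from the Gaussian log-likelihood (you use the expected outer product of the score where the paper takes the expected Hessian and drops the mean-zero $\phi_m$ terms, which is equivalent), and the same rank-deficiency argument via the phase-ambiguity null direction --- your $[\,j\mathbf{v}^\mathcal{T},\,-j\mathbf{v}^\mathcal{H}\,]^\mathcal{T}$ is just $j$ times the paper's $\mathbf{d}=[\mathbf{v}^\mathcal{T}~-(\mathbf{v}^\ast)^\mathcal{T}]^\mathcal{T}$, annihilated for the same reason that $\mathbf{v}^\mathcal{H}\mathbf{H}_m\mathbf{v}$ is real.
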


Although rank-deficient, the pseudo-inverse of $\mathbf{F}(\bv,\bv^\ast)$ qualifies as a valid lower bound on the mean-square error (MSE) of any unbiased estimator~\cite{tsp2001sm}. Rank deficiency of the FIM originates from the inherent voltage angle ambiguity: SCADA measurements remain invariant if nodal voltages are shifted globally by a unimodular phase constant. Fixing the angle of the reference bus waives this issue. It is also worth stressing that the CRLB in Prop.~\ref{prop:crlb} is oftentimes attainable and benchmarks the optimal estimator performance~\cite{tsp2001sm}. Having derived the CRLB for the PSSE task, our next subsection deals with PSSE solvers.

%------------------- sec:problem:GN ----------------------
\subsection{Gauss-Newton Iterations}\label{subsec:problem:GN}
Consider for specificity model \eqref{eq:WLS2.5}, though the real-valued model in \eqref{eq:WLS2.6} or the model involving polar coordinates could be employed as well. When the noise covariance matrix $\bm{\Sigma}_{\epsilon}=\mathbf{I}_M$, the PSSE task in \eqref{eq:WLS3} reduces to the nonlinear LS problem
\begin{equation}\label{eq:NLS}
\underset{\bv\in\mathbb{C}^{N_b}}{\minimize}~\|\bz-\bh(\bv)\|_2^2
\end{equation}
for which the Gauss-Newton iterations are known to offer the ``workhorse'' solution \cite[Ch.~1]{Be99}, \cite[Ch.~2]{AburExpositoBook}. According to the Gauss-Newton method, the function $\bh(\bv)$ is linearized at a given point $\mathbf{v}^{i}\in\mathbb{C}^{N_b}$ using Taylor's expansion as
\begin{equation*}
\tilde{\mathbf{h}}(\mathbf{v},\mathbf{v}^{i}):= \mathbf{h}(\mathbf{v}^{i})+\mathbf{J}^i(\mathbf{v}-\mathbf{v}^{i})
\end{equation*}
where  $\mathbf{J}^i:=\nabla \mathbf{h}(\mathbf{v}^{i})$ is the $M \times N_b$ Jacobian matrix of $\mathbf{h}$ evaluated at $\mathbf{v}^i$, whose $(m,n)$-th entry is given by the Wirtinger derivative $\partial h_m/\partial \cV_n$; see e.g., \cite{wirtinger} for Wirtinger's calculus. The Gauss-Newton method subsequently approximates the nonlinear LS fit in \eqref{eq:NLS} with a linear one of $\tilde{\mathbf{h}}$, and relies on its minimizer to obtain the next iterate as
\begin{align}\label{eq:vk1}
\mathbf{v}^{i+1}&\in \arg\min_{\mathbf{v}}~\big\|\mathbf{z}-\tilde{\mathbf{h}}(\mathbf{v},\mathbf{v}^{i})\big\|^2\nonumber\\
&=\arg\min_{\mathbf{v}}~\big\|\mathbf{z}-\mathbf{h}(\mathbf{v}^{i})\big\|^2-2(\mathbf{v}-\mathbf{v}^{i})^\mathcal{H}(\mathbf{J}^{i})^\mathcal{H}(\mathbf{z}-\mathbf{h}(\mathbf{v}^{i}))\nonumber\\ &\qquad\qquad\quad
+(\mathbf{v}-\mathbf{v}^{i})^\mathcal{H}(\mathbf{J}^i)^\mathcal{H}\mathbf{J}^i
(\mathbf{v}-\mathbf{v}^{i}).
\end{align}
When matrix $(\mathbf{J}^i)^\mathcal{H}\mathbf{J}^i$ is invertible, $\mathbf{v}^{i+1}$ can be found in closed form as
\begin{equation}\label{eq:gn}
\mathbf{v}^{i+1}=\mathbf{v}^{i}+\big[(\mathbf{J}^i)^\mathcal{H}\mathbf{J}^i\big]^{-1}(\mathbf{J}^i)^\mathcal{H}(\mathbf{z}-\mathbf{h}(\mathbf{v}^{i})).
\end{equation}
The state estimate is iteratively updated using \eqref{eq:gn} until some stopping criterion is satisfied.

If, on the other hand, the WLS cost \eqref{eq:WLS3} is minimized, the Gauss-Newton iterations can be similarly obtained by treating $\bm{\Sigma}^{-1/2}_{\bm \epsilon}\mathbf{z}$ as $\mathbf{z}$ and $\bm{\Sigma}^{-1/2}_{\bm \epsilon}\mathbf{h}(\mathbf{v})$ as $\mathbf{h}(\mathbf{v})$
in \eqref{eq:NLS}, yielding
\begin{equation}\label{eq:gnw}  
\mathbf{v}^{i+1}=\mathbf{v}^{i}+\big[(\mathbf{J}^i)^\mathcal{H}\bm{\Sigma}^{-1}_{\bm \epsilon}\mathbf{J}^i\big]^{-1}(\mathbf{J}^i)^\mathcal{H}\bm{\Sigma}^{-1}_{\bm \epsilon}(\mathbf{z}-\mathbf{h}(\mathbf{v}^{i})).
\end{equation}  

It is well known that the pure Gauss-Newton iterations in \eqref{eq:gn} or \eqref{eq:gnw} may not guarantee convergence, which in fact largely depends on the starting point $\mathbf{v}^{0}$ \cite[Ch. 1.5]{Be99}. A common way to improve convergence and ensure descent of the cost in \eqref{eq:WLS3} consists of including a backtracking line search in \eqref{eq:gnw} to end up with
\begin{equation}\label{eq:gnb}
\mathbf{v}^{i+1}=\mathbf{v}^{i}+\mu^{i}\big[(\mathbf{J}^i)^\mathcal{H}\bm{\Sigma}^{-1}_{\bm \epsilon}\mathbf{J}^i\big]^{-1}(\mathbf{J}^i)^\mathcal{H}\bm{\Sigma}^{-1}_{\bm \epsilon}(\mathbf{z}-\mathbf{h}(\mathbf{v}^{i}))
\end{equation}
where the step size $\mu^{i}>0$ is found through the backtracking line search rule~\cite[Ch.~1.2]{Be99}. Due to its intimate relationship with ordinary gradient descent alternatives for nonconvex optimization however, this Gauss-Newton iterative procedure can be trapped by local solutions \cite[Ch.~1.5]{Be99}.
In a nutshell, the grand challenge remains to develop PSSE solvers capable of attaining or approximating the global optimum at manageable computational complexity. A few recent proposals in this direction are presented next.

%------------------- sec:problem:SDR ----------------------
\subsection{Semidefinite Relaxation}\label{subsec:problem:SDR}
A method to tackle the nonlinear measurement model that can convert the PSSE problem of \eqref{eq:WLS5} to a convex semidefinite program (SDP) has been introduced in~\cite{naps2012zhu}, 
\cite{jstsp2014zhu}. Consider first expressing each measurement in $\bz$ linearly in terms of the outer-product matrix $\bV:=\bv\bv^\cH$. In this way, the quadratic models in \eqref{eq:power flow}, \eqref{eq:Vm}, and \eqref{eq:power flowl} can be transformed to linear ones in terms of the matrix variable $\bV$. Thus, each noisy  measurement in \eqref{eq:WLS1} can be written as $z_m = \bv^\cH \bH_m \bv  +\epsilon_m =\trace(\bH_m \bV) + \epsilon_m$. Rewriting the PSSE task in \eqref{eq:WLS5} accordingly in terms of $\bV$ reduces to 
\begin{subequations}\label{SE_wls2}
\begin{align}
\hat{\bV}_1 := \arg\min_{\bV\in\mathbb{C}^{N_b\times N_b}}  ~&\sum_{m=1}^M
\big[z_m - \trace(\bH_m\bV) \big]^2 \label{SE_wls2f}\\
\subjectto ~&~ \bV \succeq \bzero,~~\text{and}~~\rank(\bV) = 1
\label{SE_wls2c}
\end{align}
\end{subequations}
where the positive semi-definite (PSD) and the rank-1 constraints jointly ensure that for any $\bV$ obeying \eqref{SE_wls2c}, there always exists a vector $\bv$ such that $\bV = \bv\bv^\cH$.

Although $z_m$ and $\bV$ are linearly related as in \eqref{SE_wls2}, nonconvexity is still present in two aspects: (i) the cost function in \eqref{SE_wls2f} has degree 4 in the entries of $\bV$; and (ii) the rank constraint in \eqref{SE_wls2c} is nonconvex. Aiming for an SDP reformulation of \eqref{SE_wls2}, Schur's complement lemma, see e.g., \cite[Appx. 5.5]{BoVa04}, can be leveraged to tightly bound each summand in \eqref{SE_wls2f} using an auxiliary variable $\chi_m >0$. Collecting all $\chi_m$'s in $\bchi \in \mathbb R^m$, the problem in \eqref{SE_wls2} can be expressed as
\begin{subequations}\label{SE_sdpo}
\begin{align}
\{\hat{\bV}_2,\;\hat{\bchi}_2\}:=\arg \min_{\bV,\;\bchi}~&~\bone^\cT \bchi \hfill~ \label{SE_sdpof}\\
\subjectto ~&~
\bV \succeq \bzero,~\text{and}~\rank(\bV) = 1 \label{SE_sdpoc}\\
&\left[\! \begin{array}{cc} \chi_m \!&\! z_m- \trace(\bH_{m} \bV) \\
z_m- \trace(\bH_{m} \bV)\! &\! 1 \end{array} \! \right] \!  \succeq \! \bzero,~\forall m.
\label{SE_sdpocc}
\end{align}
\end{subequations}
The equivalence among all three SE problems \eqref{eq:WLS5}, \eqref{SE_wls2}, and \eqref{SE_sdpo} has been shown in \cite{naps2012zhu}, where their optimal solutions satisfy:
\begin{align}\label{SE_eqv}
\hat{\bV}_1 &= \hat{\bV}_2=  \hat{\bv}\hat{\bv}^\cH, ~~~
\mathrm{and}~~  \hat \chi_{2,m} = \big[z_m - \trace(\bH_m \hat{\bV}_2)\big]^2,~\forall m.
\end{align}

The only source of nonconvexity in the equivalent SE problem of \eqref{SE_sdpo} comes from the rank-$1$ constraint. Motivated by the technique of semidefinite relaxation (SDR); see e.g., the seminal work of \cite{gwsdr95}, one can obtain the following convex SDP upon dropping the rank constraint
\begin{subequations}\label{SE_sdp}
\begin{align}
\{\hat{\bV},\;\hat{\bchi}\}:=\arg \min_{\bV,\;\bchi}~&\;\bone^\cT\bchi \label{SE_sdpf}\\
\subjectto ~&~\bV\succeq \bzero,  ~~\text{and}~~\eqref{SE_sdpocc}.\label{SE_sdpcc}
\end{align}
\end{subequations}

For the SDR-PSSE formulation in \eqref{SE_sdp}, a few assumptions have been made in \cite{jstsp2014zhu} to establish its global optimality in a specific setup.
{\it
\begin{enumerate}
\itemsep -2pt
\item
[$(\mathbf{as1})$] The graph $\cG=(\cB,\cL)$ has a tree topology;
\item
[$(\mathbf{as2})$] Every bus is equipped with a voltage magnitude meter; and
\item
[$(\mathbf{as3})$]  All measurements in $\bz$ are noise-free, that is $\bepsilon = \bzero$.
\end{enumerate}}

\begin{proposition} \label{prop:SE} 
Under  (as1)-(as3), solving the relaxed problem \eqref{SE_sdp} attains the global optimum of the original PSSE problem \eqref{SE_sdpo} or \eqref{eq:WLS5}; that is, $\rank(\hat{\bV})=1$.
\end{proposition}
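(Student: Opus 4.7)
The plan is to leverage (as3) to show that the SDR in \eqref{SE_sdp} attains optimal value zero (so the relaxation is tight in cost), and then to exploit (as1)-(as2) to force any optimizer to be rank-one via a Cauchy-Schwarz-on-a-tree argument. The global-optimality claim then follows because a feasible rank-one solution of \eqref{SE_sdp} also solves \eqref{SE_sdpo}, and hence via \eqref{SE_eqv} also \eqref{eq:WLS5}.

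First I would note that the true outer product $\bV:=\bv\bv^\cH$ (formed from the underlying state $\bv$) is PSD, and by (as3) satisfies $\trace(\bH_m\bV)=z_m$ for every $m$, so the pair $(\bV,\bzero)$ meets the LMIs \eqref{SE_sdpocc} with equality and attains cost $0$. Since the $(1,1)$-entries of those LMIs already force $\chi_m\geq 0$, the SDR optimal value is exactly $0$, and every optimizer $(\hat\bV,\hat\bchi)$ must satisfy $\hat\bchi=\bzero$ and hence $\trace(\bH_m\hat\bV)=z_m$ for all $m$, together with $\hat\bV\succeq\bzero$.

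Next I would use (as2) and \eqref{eq:Vm} to pin down the diagonal of any optimizer: $\hat V_{nn}=|\cV_n|^2$ for every $n\in\cB$. On each tree edge $(n,k)\in\cL$, the available power-flow (or equivalent injection) measurements depend affinely on the off-diagonal entry $V_{nk}$ once $V_{nn}$ has been fixed; for a $\pi$-section line, for example, $P_{nk}+jQ_{nk}=y_{nk}^\ast(V_{nn}-V_{nk})$, with analogous invertible affine relations for the transformer/shunt case in \eqref{eq:t}. Because the true entry $V_{nk}=\cV_n\cV_k^\ast$ satisfies the same affine equations, invertibility of this map yields $\hat V_{nk}=\cV_n\cV_k^\ast$ and in particular $|\hat V_{nk}|^2=\hat V_{nn}\hat V_{kk}$ for every tree edge.

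To finish, I would factor $\hat\bV=\bX^\cH\bX$ with columns $\bx_1,\ldots,\bx_{N_b}$, so that for every tree edge $(n,k)$
\[
|\bx_n^\cH\bx_k|^2=|\hat V_{nk}|^2=\hat V_{nn}\hat V_{kk}=\|\bx_n\|^2\|\bx_k\|^2,
\]
and equality in Cauchy-Schwarz forces $\bx_n$ and $\bx_k$ to be collinear along every edge. By (as1) the tree spans $\cB$, so collinearity propagates along the tree and all columns of $\bX$ lie in one common one-dimensional subspace, giving $\rank(\hat\bV)\leq 1$; since a nontrivial state excludes $\hat\bV=\bzero$, we conclude $\rank(\hat\bV)=1$. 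The hardest step is the middle one: one has to establish, branch by branch, that the SCADA measurements on each tree edge actually generate an invertible affine map from $V_{nk}$ to the measurement tuple, and for internal nodes use (as1) to propagate injection constraints from the leaves inward into per-edge identities. Once this observability-per-branch claim is in hand, the rank-one conclusion from Cauchy-Schwarz is essentially automatic.
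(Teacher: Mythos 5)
First, note that this chapter does not actually contain a proof of Proposition~\ref{prop:SE}: the result is imported from \cite{jstsp2014zhu}, and the text only sketches that the argument shows the PSD matrix can be ``completed'' from rank-one data on a tree. Your architecture is the standard tree-exactness argument and is the right one: (i) under (as3) the true $\bv\bv^{\cH}$ is feasible for \eqref{SE_sdp} with cost zero, and since the Schur-complement constraints \eqref{SE_sdpocc} force $\chi_m\geq(z_m-\trace(\bH_m\bV))^2\geq 0$, every optimizer interpolates all measurements exactly; (ii) (as2) and \eqref{eq:Vm} pin the diagonal of $\hat{\bV}$; (iii) the flow data pin the off-diagonal entries on tree edges; and (iv) equality in Cauchy--Schwarz on each edge of a spanning tree forces all Gram vectors to be collinear, hence $\rank(\hat{\bV})\leq 1$. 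Steps (i), (ii), and (iv) are correct as written.

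The genuine gap is step (iii), which you yourself flag as ``the hardest step'' but then assert can be established. It cannot be established from (as1)--(as3) alone, because those assumptions say nothing about which flow or injection quantities are metered on each line. Concretely, if the measurement set consists only of the voltage magnitudes required by (as2), then $\diag(\{V_n^2\})$ is a feasible optimizer of \eqref{SE_sdp} with cost zero and rank $N_b$, so the literal conclusion $\rank(\hat{\bV})=1$ for \emph{every} optimizer is false; and even when flows are present, a single real measurement such as $P_{nk}$ gives one real equation for the two real unknowns $\Re\{V_{nk}\},\Im\{V_{nk}\}$, so invertibility of your ``affine map'' hinges on having a second independent equation per edge (e.g., $Q_{nk}$, or the reverse-direction flow $P_{kn}$ with a nonsingular $2\times2$ coefficient matrix, which itself fails when $g_{nk}=0$ or $b_{nk}=0$). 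The leaf-to-root propagation of injection constraints likewise needs an explicit elimination and counting argument, not just a plausibility claim. To close the proof you must either (a) add an explicit metering hypothesis guaranteeing two independent real equations per tree edge and verify the corresponding $2\times2$ determinants are nonzero, or (b) weaken the conclusion to the existence of a rank-one optimizer --- which under (as3) follows already from your step (i) without any use of (as1)--(as2) and is presumably not what the proposition intends.
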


Assumptions (as1)-(as3) may offer a close approximation of the realistic PSSE scenario, thanks to characteristics of transmission systems such as sparse connectivity, almost flat voltage profile, and high metering accuracy. Although they do not hold precisely in realistic transmission systems, near-optimality of the relaxed problem \eqref{SE_sdp} has been numerically supported by extensive tests~\cite{jstsp2014zhu}. A more crucial issue is to recover a feasible SE solution from the relaxed problem \eqref{SE_sdp}, as $\hat{\bV}$ is very likely to have rank greater than 1. This is possible either by finding the best rank-1 approximation to $\hat{\bV}$ via eigenvalue decomposition, or via randomization~\cite{LuMa10}.

SDR endows SE with a convex SDP formulation for which efficient schemes are available to obtain the global optimum using for example the interior-point solver. The computational complexity for eigen-decomposition is in the order of matrix multiplication, and thus negligible compared to solving the SDP; see \cite{LuMa10} and references therein. However, the polynomial complexity of solving the SDP could be a burden for real-time power system monitoring, which motivates well the distributed implementation of Section~\ref{subsec:distsolv:sdr}.

%------------------- sec:problem:PSDR ----------------------
\subsection{Penalized Semidefinite Relaxation}\label{subsec:problem:PSDR}
Building on an alternative formulation of the \emph{power flow problem}, a penalized version of the aforementioned SDP-based state estimator has been devised in~\cite{MLB15,MALB16,psse2016zhang}. Commencing with the power flow task, it can be interpreted as a particular instance of PSSE, where:
\begin{itemize}
\item Measurements (henceforth termed specifications) are noiseless;
\item Excluding the reference bus, buses are partitioned into the subset $\cB_{\text{PV}}$ for which active injections and voltage magnitudes are specified, and the subset $\cB_{\text{PQ}}$, for which active and reactive injections are specified.
\end{itemize}
The power flow task can be posed as the feasibility problem; that is,
\begin{align}\label{eq:power flow0}
\textrm{find}~&~\bv\in\mathbb{C}^{N_b}\\
\subjectto ~&~P_n=\bv^{\cH}\bH_{P_n}\bv,\quad \forall n\in \cB_{\text{PV}}\cup\cB_{\text{PQ}}\nonumber\\
&~Q_n=\bv^{\cH}\bH_{Q_n}\bv,\quad \forall n\in \cB_{\text{PQ}}\nonumber\\
&~V_n^2=\bv^{\cH}\bH_{V_n}\bv,\quad \forall n\in \cB_{\text{PV}},~\text{and}~V_\text{ref}^2=V_0.\nonumber
\end{align}
Using the SDP reformulation presented earlier, the power flow task can be equivalently expressed as
\begin{align}\label{eq:power flow1}
\textrm{find}~&~\bV\in\mathbb{C}^{N_b\times N_b}\\
\subjectto ~&~z_m=\trace(\bH_{m} \bV),\quad m=1,\ldots,2N_b-1\nonumber\\
~&~\bV \succeq \bzero,~\text{and}~\rank(\bV) = 1\nonumber
\end{align}
where the specifications (constraints) of \eqref{eq:power flow0} have been generically captured by the pairs $\{(z_m;\bH_m)\}_{m=1}^{2N_b-1}$. 

Although the optimization in \eqref{eq:power flow1} is non-convex, a convex relaxation can be obtained by dropping the rank constraint. To promote rank-one solutions, the feasibility problem is further reduced to \cite{MLB15}
\begin{align}\label{eq:power flow2}
\hat{\bV}:=\arg\min_{\bV\succeq \bzero}~&\;\trace(\bH_0\bV)\\
\subjectto ~&~z_m=\trace(\bH_{m} \bV),~m=1,\ldots,2N_b-1.
\end{align}
If the Hermitian matrix $\bH_0$ is selected such that $\bH_0\succeq \bzero$, $\rank(\bH_0)=N_b-1$, and $\bH_0\bone=\bzero$; then any state $\bv$ close to the flat voltage profile $\bone+j\bzero$ can be recovered from the rank-one minimizer $\hat{\bV}$ of \eqref{eq:power flow2}; see~\cite[Thm.~2]{MLB15}. The stated conditions exclude the case $\bH_0=\bI_{N_b}$ that would have led to the nuclear-norm heuristic commonly used in low-rank matrix completion.

Spurred by this observation, the PSSE task can be posed as~\cite{MALB16}
\begin{equation}\label{eq:pPSSE1}
\hat{\bV}_{\mu}:=\arg\min_{\bV\succeq \bzero}~\trace(\bH_0\bV)+\mu\sum_{m=1}^M f_m\big(z_m-\trace(\bH_{m} \bV)\big)
\end{equation}
for some regularization parameter $\mu\geq 0$, where $M$ now can be larger than $2N_b-1$. The second term in the cost of \eqref{eq:pPSSE1} is a data-fitting term ensuring that the recovered state is consistent with the collected measurements based on selected criteria. Two cases for $f_m$ of special interest are the LS fit $f_m^{\text{LS}}(\epsilon):=\epsilon^2$, and the least-absolute value (LAV) one $f_m^{\text{LAV}}(\epsilon):=|\epsilon|$, $\forall m=1,\ldots,M$. On the other hand, the first term in \eqref{eq:pPSSE1} can be understood as a \emph{regularizer} to promote rank-one solutions $\hat{\bV}_{\mu}$; see more details in \cite{MALB16}. 

In the noiseless setup, where all measurements comply with the model $z_m=\bv^{\cH}_0\bH_m\bv_0$, the minimization in \eqref{eq:pPSSE1} has been shown to possess a rank-one minimizer $\hat{\bV}_{\mu}=\hat{\bv}_{\mu}\hat{\bv}_{\mu}^{\cH}$ for all $\mu\geq 0$ under both $f_m^{\text{LS}}$ and $f_m^{\text{LAV}}$; see details in \cite{MALB16}. Interestingly, the solution $\hat{\bv}_{\mu}$ obtained under the LS fit does not coincide with $\bv_0$ for any $\mu\geq 0$, whereas the LAV solution $\hat{\bv}_{\mu}$ provides the actual state $\bv_0$ for a sufficiently large $\mu$. Error bounds between $\hat{\bV}_{\mu}$ and $\bv_0\bv_0^{\cH}$ under the regularized LAV solution for noisy measurements are established in~\cite{psse2016zhang}.

%------------------- sec:problem:FPP ----------------------
\subsection{Feasible Point Pursuit}\label{subsec:problem:FPP}
The feasible point pursuit (FPP) method studied in~\cite{spl2015sidiropoulos} offers another computationally manageable solver for approximating the globally optimal PSSE. 
As a special case of the convex-concave procedure \cite{ccp2016}, 
FPP is an iterative algorithm for handling general nonconvex quadratically constrained quadratic programs (QCQPs)~\cite{tsg2016zamzam}. It approximates the feasible solutions of a nonconvex QCQP by means of a sequence of convexified QCQPs obtained with successive convex inner-restrictions of the original nonconvex feasibility set~\cite{tsg2016zamzam}. 

The first step in applying FPP to PSSE is a reformulation of \eqref{eq:WLS3} into a standard QCQP~\cite{tps2017wzgs,globalsip2016wzgs}
\begin{subequations}\label{eq:qcqpstd}
\begin{align}
\underset{\mathbf{v}\in\mathbb{C}^{N_b},\;\bchi\in\mathbb{R}^M}{\minimize}~&~\bchi^\mathcal{T}\bm{\Sigma}_{\bm \epsilon}^{-1}\bchi\\
\subjectto~&~\mathbf{v}^\mathcal{H} \mathbf{H}_{m} \mathbf{v} \leq  z_m   + \chi_m,\quad\quad\quad\;\;  1\le m \le M\label{eq:qcqpstd2}\\
~&~\mathbf{v}^\mathcal{H} \left(-\mathbf{H}_{m}\right) \mathbf{v} \leq  -z_m   + \chi_m,\quad 1\le m\le M \label{eq:qcqpstd3}
\end{align}
\end{subequations}
where vector $\bchi\in\mathbb{R}^M$ collects the auxiliary variables $\{\chi_m\ge 0\}_{m=1}^M$. For power flow and power injection measurements, the corresponding Hermitian measurement matrices $\{\mathbf{H}_m\}$ are indefinite in general; thus, both constraints \eqref{eq:qcqpstd2} and \eqref{eq:qcqpstd3} are nonconvex. On the contrary, squared voltage magnitude measurements relate to positive semidefinite matrices $\{\mathbf{H}_m\}$, so that only the related constraint \eqref{eq:qcqpstd3} is nonconvex. Either way, problem \eqref{eq:qcqpstd} is NP-hard, and hence computationally intractable~\cite{nphard}. 

Using eigen-decomposition, every measurement matrix $\mathbf{H}_m$ can be expressed as the sum of a positive and a negative semidefinite matrix as $\mathbf{H}_m= \mathbf{H}_m^{+}+\mathbf{H}_m^{-}$, so that the constraints in \eqref{eq:qcqpstd} are rewritten as
\begin{subequations}\label{eq:con}
\begin{align}
~&~~	\mathbf{v}^\mathcal{H} \mathbf{H}_{m}^{+} \mathbf{v} + \mathbf{v}^\mathcal{H} \mathbf{H}_{m}^{-} \mathbf{v}\leq  z_m   + \chi_m\\
~&~~\mathbf{v}^\mathcal{H} \mathbf{H}_{m}^{+} \mathbf{v} +\mathbf{v}^\mathcal{H} \mathbf{H}_{m}^{-} \mathbf{v} \geq  z_m   - \chi_m
\end{align}
\end{subequations}
for $m=1,\ldots,M$. Observe now that since $\mathbf{v}^\mathcal{H} \mathbf{H}_{m}^{-} \mathbf{v}$ is a concave function of $\bv$, it is upper bounded by its first-order (linear) approximation at any point $\bv^i$; that is,
\[\mathbf{v}^\mathcal{H} \mathbf{H}_{m}^{-} \mathbf{v}\leq 2\Re\{(\mathbf{v}^i)^\mathcal{H} \mathbf{H}_{m}^{-} \mathbf{v}\}  - (\mathbf{v}^i)^\mathcal{H} \mathbf{H}_{m}^{-} \mathbf{v}^i.\]
The concave function $\mathbf{v}^\mathcal{H} (-\mathbf{H}_{m}^{+})\mathbf{v}$ can be upper bounded similarly. 

Based on this observation, the FPP technique replaces the concave functions in the constraints of \eqref{eq:con} with their linear upper bounds. The point of linearization at every iteration is chosen to be the previous state estimate. Specifically, initializing with $\mathbf{v}^{0}$, the FPP produces the iterates
\begin{align}\label{eq:fpp}
\{\mathbf{v}^{i+1},\;\bchi^{i+1}\}&:=\arg\min_{\mathbf{v},\;\bchi\geq \bzero}~\bchi^\mathcal{T}\bm{\Sigma}_{\bm \epsilon}^{-1}\bchi\\
\subjectto~&\;\mathbf{v}^\mathcal{H} \mathbf{H}_m^{+} \mathbf{v} + 2 \Re\{(\mathbf{v}^{i})^\mathcal{H} \mathbf{H}_m^{-} \mathbf{v}\}  \leq z_m + (\mathbf{v}^{i})^\mathcal{H}\mathbf{H}_m^{-} \mathbf{v}^{i} + \chi_m,~\forall m\nonumber\\
&\;\mathbf{v}^\mathcal{H} \mathbf{H}_m^{-} \mathbf{v} + 2 \Re\{(\mathbf{v}^{i})^\mathcal{H} \mathbf{H}_m^{+} \mathbf{v}\}  \geq z_m + (\mathbf{v}^{i})^\mathcal{H} \mathbf{H}_m^{+} \mathbf{v}^{i} -\chi_m,~\forall m.\nonumber
\end{align}
At every iteration, the FPP technique solves the now convex QCQP in \eqref{eq:fpp}. The procedure has been shown to globally converge to a stationary point of the WLS formulation \eqref{eq:WLS3} of the PSSE task~\cite{tps2017wzgs}. Extensions of the developed FPP solver to cope with linear(ized) measurements and bad data are straightforward; see \cite{tps2017wzgs} for details.

\begin{figure}[t]
\begin{subfigure}
\centering
\includegraphics[width=60.5mm,scale=2]{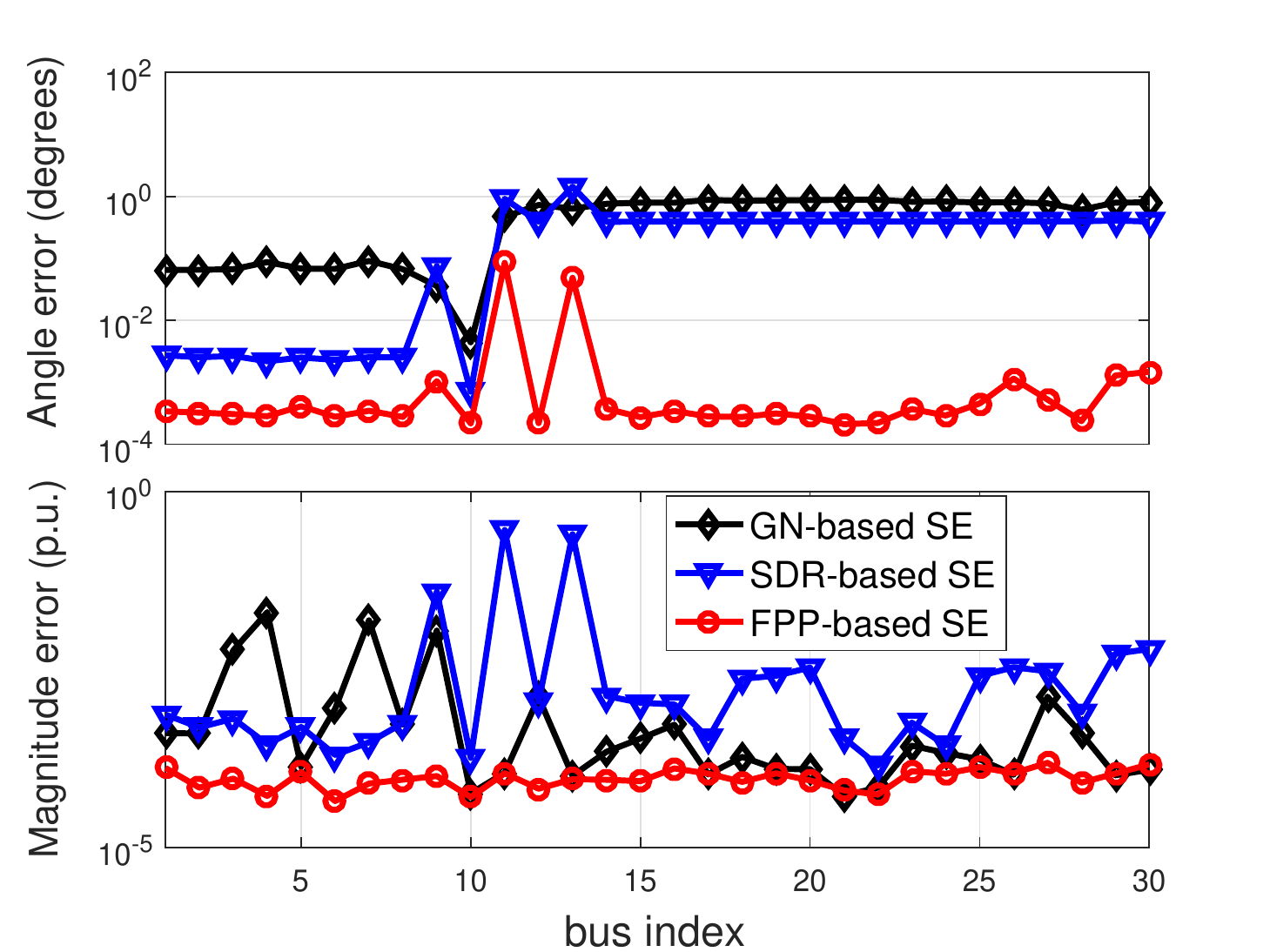}
\end{subfigure}
\begin{subfigure}
\centering
\includegraphics[width=66mm,scale=2]{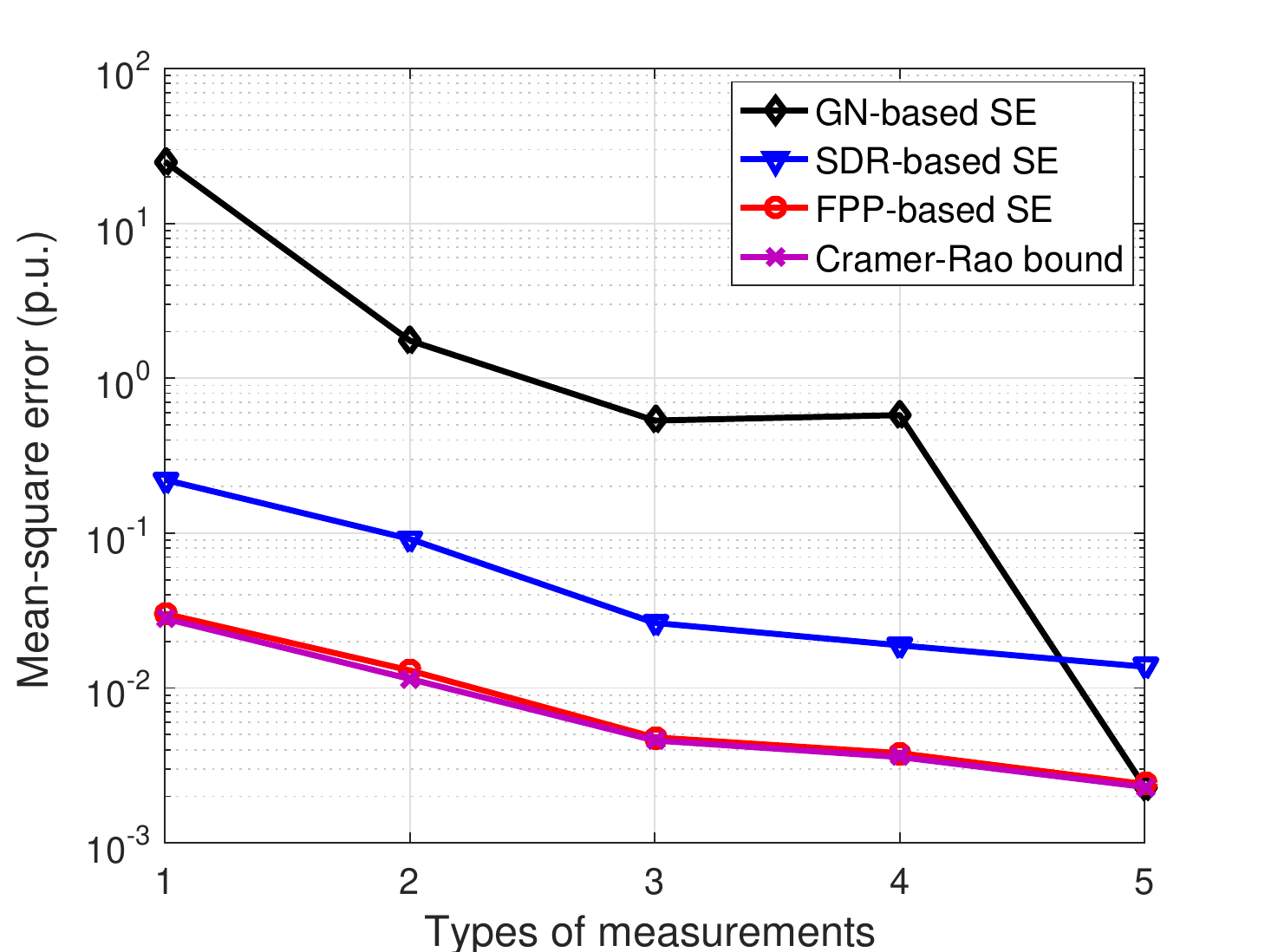}
\end{subfigure}
\hspace{-1em}
\caption{Left: Voltage magnitude and angle estimation errors per bus for the IEEE $30$-bus system. Right: MSEs and CRLB versus types of measurements used for the IEEE $14$-bus system using: (i) Gauss-Newton iterations; ii) the SDR-based PSSE; and iii) the FPP-based PSSE.}\label{fig:30bus}
\end{figure}

Figure~\ref{fig:30bus} compares Gauss-Newton iterations, the SDR-based solver, and the FPP-based solver on the IEEE 14- and 30-bus systems~\cite{PSTCA}. The actual nodal voltage magnitudes and angles were generated uniformly at random over $[0.9,1.1]$ and $[-0.4\pi,0.4\pi]$, respectively. Independent zero-mean Gaussian noise with standard deviation $0.05$ for power and $0.02$ for voltage measurements was assumed, and all reported results were averaged over $100$ independent Monte Carlo realizations. The measurements for the IEEE 30-bus system include all nodal voltage magnitudes and the active power flows at both sending and receiving ends. The left panel of Fig.~\ref{fig:30bus} depicts that the magnitude and angle estimation errors attained by the FPP solver are consistently below its competing alternatives \cite{tps2017wzgs}. 
%Regarding running time, Gauss-Newton method converged in 0.2~sec, while the SDR and the FPP solver took an average of 10 and 60~sec, respectively. 

The second experiment examines the MSE performance of the three approaches relative to the CRLB of \eqref{eq:crlb} for the IEEE $14$-bus test system. Initially, all voltage magnitudes as well as all sending- and receiving-end active power flows were measured, which corresponds to the base case 1 in the $x$-axis of the right panel of Fig.~\ref{fig:30bus}. To show the MSE performance relative for an increasing number of measurements, additional types of measurements were included in a deterministic manner. All types of SCADA measurements were ordered as $\{V_n^2,P_{kn},P_{nk},Q_{kn},Q_{nk},P_n,Q_n\}$. Each $x$-axis value in the right panel of Fig.~\ref{fig:30bus} implies that the number of ordered types of measurements was used in the experiment to obtain the corresponding MSEs.

%------------------- sec:problem:PMU ----------------------
\subsection{Synchrophasors}\label{subsec:problem:PMU}
To incorporate synchrophasors into the PSSE formulation, let $\bzeta_n = \bPhi_n \bv + \bvarepsilon_n$ collect the noisy PMU data at bus $n$ [cf.~\eqref{eq:WLS2.5}]. The related measurement matrix is $\bPhi_n$, and the measurement noise $\bvarepsilon_n$ is assumed to be complex zero-mean Gaussian, independent from the noise $\bepsilon$ in legacy meters and across buses. Following the normalization convention in \eqref{eq:WLS5}, the noise vector $\bvarepsilon_n$ is assumed prewhitened, such that all PMU measurements exhibit the same accuracy. The PSSE task now amounts to estimating $\bv$ given both $\bz$ and $\{\bzeta_n\}_{n\in\cP}$, where $\cP \subseteq \cB$ denotes the subset of the PMU-instrumented buses. Hence, the MLE cost in \eqref{eq:WLS5} needs to be augmented by the log-likelihood induced by PMU data as
\begin{equation}
 \hat{\bv} := \arg \min_\bv ~\sum_{m=1}^M 
(z_m - h_m (\bv))^2 + \sum_{n\in\cP} \| \bzeta_n -\bPhi_n\bv\|_2^2. \label{SEa_wls}
\end{equation}
The SDR methodology is again well motivated to convexify the augmented PSSE problem \eqref{SEa_wls} into
\begin{subequations}\label{SEa_sdp}
\begin{align}
\underset{\bV,\;\bv,\;\bchi}{\minimize} ~&~\bone^\cT\bchi + \sum_{n\in\cP}  \big[\trace(\bPhi_n^\cH \bPhi_n \bV) - 2 \Re\{\bzeta_n^\cH \bPhi_n\bv\}\big]  \label{SEa_sdpf}\\
\subjectto ~&~  \!
\left[\begin{array}{cc} \bV & \bv \\ \bv^\cH & 1 \end{array}\right] \succeq \bzero,~~\text{and}~~\eqref{SE_sdpocc}.\label{SEa_sdpc}
\end{align}
\end{subequations}
By Schur's complement, the left SDP constraint in \eqref{SEa_sdpc} can be expressed equivalently as $\bV\succeq \bv\bv^\cH$. If the latter constraint is enforced with equality, the matrix $\bV$ becomes rank-one. Imposing a rank-one constraint in \eqref{SEa_sdp} renders it equivalent to the augmented PSSE task of \eqref{SEa_wls}. The SDP here also offers the advantages of \eqref{SE_sdp}, in terms of the near-optimality and the distributed implementation deferred to Section~\ref{subsec:distsolv:sdr}. To recover a feasible solution, one can again use the best rank-1 approximation or adopt the randomization technique as elucidated in \cite{jstsp2014zhu}. 

Alternatively, the two types of measurements can be jointly utilized upon interpreting the SCADA-based estimate as a prior for PMU-based estimation~\cite{PhTh08,KeGiWo12}. Specifically, if $\hat{\bv}_s$ is the SCADA-based estimate, the prior probability density function of the actual state can be postulated to be a circularly symmetric complex Gaussian with mean $\hat{\bv}_s$ and covariance $\hat{\bSigma}_s$.

Given PMU data and the SCADA-based prior, the state can be estimated following a maximum \emph{a-posteriori} probability (MAP) approach as
\begin{equation}\label{eq:MAP}
 \hat{\bv} := \arg \min_\bv~(\bv-\hat{\bv}_s)^{\cH}\hat{\bSigma}_s^{-1}(\bv-\hat{\bv}_s) + \sum_{n\in\cP} \| \bzeta_n -\bPhi_n\bv\|_2^2
\end{equation}
where the first summand is the negative logarithm of the prior distribution, and the second one is the negative log-likelihood from the PMU data. In essence, the approach in \eqref{eq:MAP} treats the SCADA-based estimate as pseudo-measurements relying on the model $\hat{\bv}_s=\bv + \bepsilon$ with circularly symmetric zero-mean noise having $\mathbb{E}[\bepsilon\bepsilon^{\cH}]=\hat{\bSigma}_s$.

%%%%%%%%%%%%%%%%%% sec:distsolv  %%%%%%%%%%%%%%%%%%%%%
\section{Distributed Solvers}\label{sec:distsolv}
Upcoming power system requirements call for decentralized solvers. Measurements are now collected at much finer spatio-temporal scales and the number of states increases exponentially as monitoring schemes extend to low-voltage distribution grids~\cite{ReCeTh10}. Tightly interconnected power systems call for the close coordination of regional control centers~\cite{ExpositoPROC11}, while operators and utilities perform their computational operations \emph{on the cloud}.

This section reviews advances in distributed PSSE solvers. As the name suggests, distributed PSSE solutions spread the computational load across different processors or control centers to speed up time, implement memory-intensive tasks, and/or guarantee privacy. A network of processors may be coordinated by one or more supervising control centers in a hierarchical fashion, or completely autonomously, by exchanging information between processors. To clarify terminology, the latter architecture will be henceforth identified as decentralized.

Distributed solvers with a hierarchical structure have been proposed since the statistical formulation of PSSE~\cite[Part III]{Schweppe70}. Different versions of this original scheme were later developed in~\cite{cutsem83}, \cite{Iwamoto89}, \cite{ZhaoAbur05}, \cite{ExpositoPROC11}, \cite{Korres11}. Decentralized schemes include block Jacobi iterations \cite{LinLin94}, \cite{Conejo07}; an approximate algorithm building on the related optimality conditions~\cite{Falcao95}; or matrix-splitting techniques for facilitating matrix inversion across areas running Gauss-Newton iterations~\cite{tps2016mll}. Most of the aforementioned approaches presume local identifiability (i.e., each area is identifiable even when shared measurements are excluded) or their convergence is not guaranteed. Assuming a ring topology, every second agent updates its state iteratively through the auxiliary problem principle in~\cite{Ebrah00}. Local observability is waived in the consensus-type solver of \cite{XieChoiKar11}, where each control center maintains a copy of the entire high-dimensional state vector resulting in slow convergence. For a relatively recent review on distributed PSSE solves, see also~\cite{MASEsurvey}.

%------------------- sec:distsolv:linear ----------------------
\subsection{Distributed Linear Estimators}\label{subsec:distsolv:linear}
Consider an interconnected system partitioned in $K$ areas supervised by separate control centers. Without loss of generality, an area may be thought of as an independent system operator region, a balancing authority, a power distribution center, or a substation~\cite{PhTh08}. Area $k$ collects $M_k$ measurements obeying the linear model
\begin{equation}\label{eq:DPSSE1}
\mathbf{z}_k=\bH_k\bv_k + \bepsilon_k
\end{equation}
where vector $\bv_k\in \mathbb{C}^{N_k}$ collects the system states related to $\bz_k$ through the complex matrix $\bH_k$. The random noise vector $\bepsilon_k$ is zero-mean with identity covariance upon prewhitening, if measurements are uncorrelated across areas. The model in \eqref{eq:DPSSE1} is exact for PMU measurements, but it may also correspond to a single Gauss-Newton iteration as explained in Section~\ref{subsec:problem:GN}.

\begin{figure}[t]
\begin{subfigure}
\centering
\includegraphics[width=0.48\linewidth]{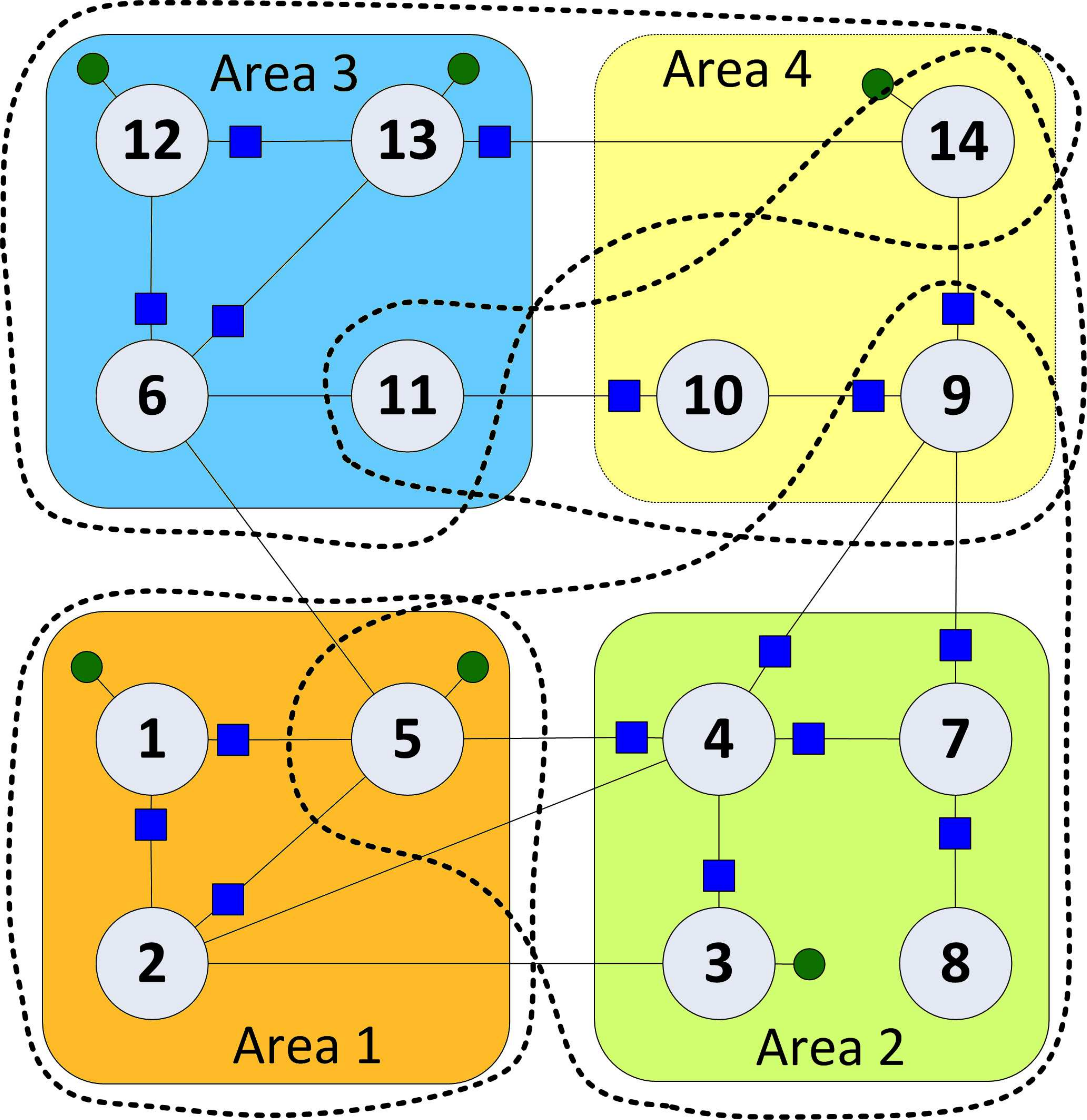}
\end{subfigure}
\hspace{2mm}
\begin{subfigure}
\centering
\includegraphics[width=0.47\linewidth]{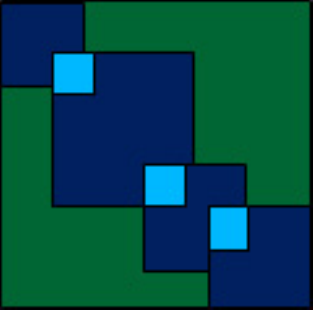}
\end{subfigure}
\caption{Left: The IEEE 14-bus system partitioned into four areas \cite{PSTCA,Korres11}. Dotted lassos show the buses belonging to area state vectors $\bv_k$'s. PMU bus voltage (line current) measurements depicted by green circles (blue squares). Right: The matrix structure for the left system with the distributed SDR solver: green square denotes the overall $\bV$, while dark and light blue ones correspond to the four area submatrices $\{\bV_k\}$ and their overlaps.}
\label{fig:ieee14}
\end{figure}

Performing PSSE locally at area $k$ amounts to solving
\begin{equation}\label{eq:local_problem}
\underset{\bv_k\in\cX_k} {\minimize}~f_k(\bv_k)
\end{equation}
where the convex set $\cX_k$ captures possible prior information, such as zero-injection buses or short circuits~\cite{Mo00}, \cite{AburExpositoBook}. If $f_k(\bv_k)=\|\bz_k-\bH_k\bv_k\|_2^2/2$, the minimizer of \eqref{eq:local_problem} is the least-squares estimate (LSE) of $\bv_k$, which is also the MLE of $\bv_k$ for Gaussian $\bepsilon_k$. 

As illustrated in Fig.~\ref{fig:ieee14}, the per-area state vectors $\{\bv_k\}_{k=1}^K$ overlap partially. Although area 4 supervises buses $\{9,10,14\}$, it also collects the current reading on lines $(10,11)$. Thus, its state vector $\bv_4$ extends to bus $\{11\}$ that is nominally supervised by area 3. To setup notation, define $\cS_{kl}$ as the shared states for a pair of neighboring areas $(k,l)$. Let also $\bv_{k}[l]$ ($\bv_{l}[k]$) denote the sub-vector of $\bv_k$ ($\bv_l$) consisting of their overlapping variables ordered as they appear in $\bv$. For example, $\bv_3[4]=\bv_4[3]$ contain the bus voltages of $\{11\}$. Solving the $K$ problems in \eqref{eq:local_problem} separately is apparently suboptimal since the estimates of shared states will disagree, tie-line measurements have to be ignored, and boundary states may thus become unobservable. 

Coupling the per-area PSSE tasks can be posed as
\begin{align}\label{eq:d_problem}
\underset{\{\bv_k\in \cX_k\}}{\minimize}~&~\!\sum_{k=1}^K f_k(\bv_k)\\
\subjectto~&~ \bv_{k}[l]=\bv_{l}[k],\quad \forall l\in\cB_k,~\forall k\nonumber
\end{align}
where $\cN_k$ is the set of areas sharing states with area $k$. The equality constraints of \eqref{eq:d_problem} guarantee consensus over the shared variables. Although \eqref{eq:d_problem} is amenable to decentralized implementations (cf.~\cite{Ebrah00}), areas need a coordination protocol for their updates. To enable a truly decentralized solution, we follow the seminal approach of~\cite{Sch08,Zhu09}. An auxiliary variable $\bv_{kl}$ is introduced per pair of connected areas $(k,l)$; the symbols $\bv_{kl}$ and $\bv_{lk}$ are used interchangeably. The optimization in \eqref{eq:d_problem} can then be written as
\begin{align}\label{eq:d_problem2}
\underset{\{\bv_k\in\mathcal{X}_k\},\;\{\bv_{kl}\}}{\minimize}~&~\sum_{k=1}^K f_k(\bv_k)\\
\subjectto ~&~ \bv_{k}[l]=\bv_{kl},\quad \forall l\in \cB_k,~k=1,\ldots,K.\nonumber
\end{align}

Problem \eqref{eq:d_problem2} can be solved using the alternating direction method of multipliers (ADMM)~\cite{Boyd10}, \cite{chap2015gg}. In its general form, ADMM tackles convex optimization problems of the form
\begin{subequations}
\begin{align}\label{eq:ADMM}
\underset{\bx\in\cX,\;\bz\in\cZ}{\minimize}~&~f(\bx)+g(\bz)\\
\subjectto~&~\bA\bx+\bB\bz=\bc
\end{align}
\end{subequations}
for given matrices and vectors $(\bA,\bB,\bc)$ of proper dimensions. Upon assigning a Lagrange multiplier $\blambda$ for the coupling constraint in \eqref{eq:ADMM}, the $(\bx,\bz)$ minimizing \eqref{eq:ADMM} are found through the next iterations for some $\mu>0$
\begin{subequations}\label{eq:ADMM2}
\begin{align}
\bx^{i+1}&:=\arg\min_{\bx\in\cX}~f(\bx)+\frac{\mu}{2}\|\bA\bx+\bB\bz^{i}-\bc+\blambda^i\|_2^2\label{eq:ADMM2:a}\\
\bz^{i+1}&:=\arg\min_{\bz\in\cZ}~g(\bz)+\frac{\mu}{2}\|\bA\bx^{i+1}+\bB\bz-\bc+\blambda^i\|_2^2\label{eq:ADMM2:b}\\
\blambda^{i+1}&:=\blambda^{i} + \bA\bx^{i+1}+\bB\bz^{i+1}-\bc.\label{eq:ADMM2:c}
\end{align}
\end{subequations}

Towards applying the ADMM iterations to \eqref{eq:d_problem}, identify variables $\{\bv_k\}$ as $\bx$ in \eqref{eq:ADMM} and $\{\bv_{kl}\}$ as $\bz$ with $g(\bz)=0$. Moreover, introduce Lagrange multipliers $\blambda_{k,l}$ for each constraint in \eqref{eq:d_problem2}. Observe that $\blambda_{k,l}$ and $\blambda_{l,k}$ correspond to the distinct constraints $\bv_{k}[l]=\bv_{kl}$ and $\bv_{l}[k]=\bv_{kl}$, respectively. According to \eqref{eq:ADMM2:a}, the per-area state vectors $\{\bv_k\}$ can be updated separately as
\begin{equation}\label{eq:ADMM3}
\bv_k^{i+1}:=\arg\min_{\bv_k\in\cX_k}~f_k(\bv_k)+\frac{\mu}{2}\sum_{l\in\cB_k}\|\bv_k[l]-\bv_{kl}^i+\blambda_{k,l}^i\|_2^2.
\end{equation}

From \eqref{eq:ADMM2:b} and assuming every state is shared by at most two areas, the auxiliary variables $\bv_{kl}$ can be readily found in closed form given by
\begin{equation}\label{eq:ADMM4}
\bv_{kl}^{i+1}=\frac{1}{2}\left( \bv_k^{i+1}[l]+ \bv_l^{i+1}[k] + \blambda_{k,l}^i + \blambda_{l,k}^i\right)
\end{equation}
while the two related multipliers are updated as
\begin{subequations}\label{eq:ADMM5}
\begin{align}
\blambda_{k,l}^{i+1}&:=\blambda_{k,l}^{i} + (\bv_k^{i+1}[l] - \bv_{kl}^{i+1})\\
\blambda_{l,k}^{i+1}&:=\blambda_{l,k}^{i} + (\bv_l^{i+1}[k] - \bv_{kl}^{i+1}).
\end{align}
\end{subequations}
Adding \eqref{eq:ADMM5} by parts and combining it with \eqref{eq:ADMM4} yields $\blambda_{k,l}^{i+1}=-\blambda_{l,k}^{i+1}$ at all iterations $i$ if the multipliers are initialized at zero. Hence, the auxiliary variable $\bv_{kl}$ ends up being the average of the shared states; that is,
\begin{equation}\label{eq:ADMM6}
\bv_{kl}^{i+1}=\frac{1}{2}\left( \bv_k^{i+1}[l]+ \bv_l^{i+1}[k] \right).
\end{equation}

To summarize, at every iteration $i$:
\begin{enumerate}
\item[\bf (i)] Each control area solves \eqref{eq:ADMM3}. If $f_k(\bv_k)$ is the LS fit and for unconstrained problems, the per-area states are updated as the LSEs using legacy software. The second summand in \eqref{eq:ADMM3} can be interpreted as pseudo-measurements on the shared states forcing them to consent across areas.
\item[\bf (ii)] Neighboring areas exchange their updated shared states. This step involves minimal communication, and no grid models need to be shared. Every area updates its copies of the auxiliary variables $\bv_{kl}$ using \eqref{eq:ADMM6}.
\item[\bf (iii)] Every area updates the Lagrange multipliers $\blambda_{kl}$ based on the deviation of the local from the auxiliary variable as in \eqref{eq:ADMM5}.
\end{enumerate}

For convex pairs $\{f_k(\bv_k),\mathcal{X}_k\}_{k=1}^K$, the aforementioned iterates reach the optimal cost in \eqref{eq:d_problem}, under mild conditions. If the overall power system is observable, the ADMM iterates converge to the unique LSE. The approach has been extended in~\cite{CAMSAP13} for joint PSSE and breaker status verification.

\begin{figure}
\centering
\includegraphics[width=0.7\linewidth]{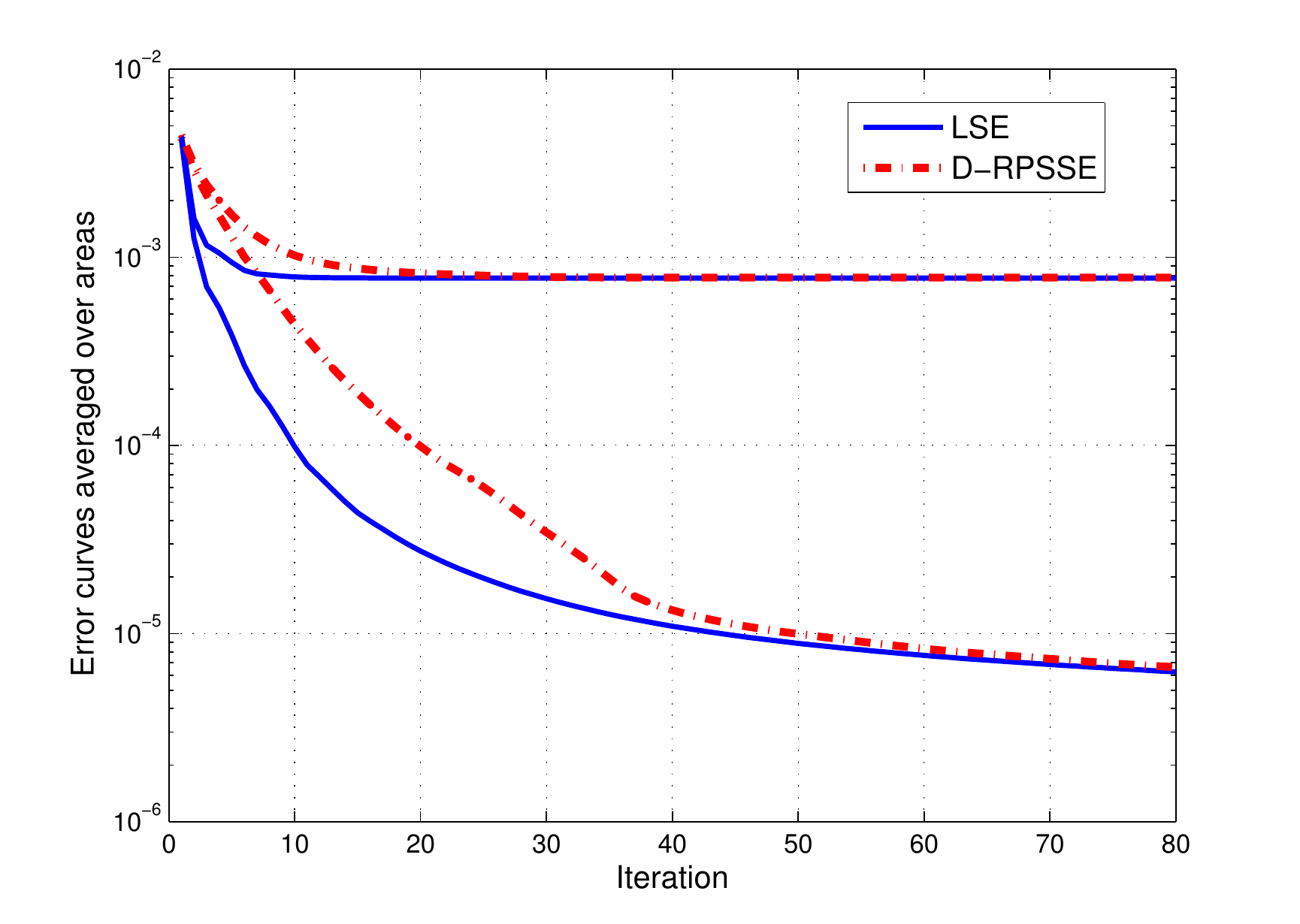}
\caption{Average error curves $\sum_{k=1}^{300}e_{k,c}^t/300$ (bottom) and $\sum_{k=1}^{300}e_{k,o}^t/300$ (top) for the LSE, and its robust counterpart D-RPSSE (see Section~\ref{subsec:robust:tests}) on a 4,200-bus grid.}
\label{fig:14x300}
\end{figure}

The decentralized algorithms were tested on a 4,200-bus power grid synthetically built from the IEEE 14- and 300-bus systems. Each of the 300 buses of the latter was assumed to be a different area, and was replaced by a copy of the IEEE 14-bus grid. Additionally, every branch of the IEEE 300-bus grid was an inter-area line whose terminal buses are randomly selected from the two incident to this line areas. Two performance metrics were adopted: the per area error to the centralized solution of \eqref{eq:d_problem}, denoted by $e_{k,c}^t{:=}\|\bv_k^{(c)}-\bv_k^t\|_2/N_k$, and the per-area error to the true underlying state defined as $e_{k,o}^t{:=}\|\bv_k-\bv_k^t\|_2/N_k$. Figure~\ref{fig:14x300} shows the corresponding error curves averaged over 300 areas. The decentralized LSE approached the underlying state at an accuracy of $10^{-3}$ in approximately 10 iterations or 6.2~msec on an Intel Duo Core @ 2.2 GHz (4GB RAM) computer using MATLAB; while the centralized LSE finished in 93.4~msec.

%------------------- sec:distsolv:sdr ----------------------
\subsection{Distributed SDR-based Estimators}\label{subsec:distsolv:sdr}
Although the SDR-PSSE approach incurs polynomial complexity when implemented as a convex SDP, its worst-case complexity is still $\ccalO(M^4 \sqrt{N_b} \log(1/\epsilon))$ for a given solution accuracy $\epsilon>0$ \cite{LuMa10}. For typical power networks, the number of measurements $M$ is on the order of the number of buses $N_b$, and thus the worst-case complexity becomes $\ccalO(N_b^{4.5}\log(1/\epsilon))$. This complexity could be prohibitive for large-scale power systems, which motivates accelerating the SDR-PSSE method using distributed parallel implementations.

Following the area partition in Fig.~\ref{fig:ieee14}, the $m$-th measurement per area $k$ can be written as 
\begin{align*}
z_{k,m} = h_{k,m}(\bv_{k})+ \epsilon_{k,m} = \trace(\bH_{k,m}
\bV_{k}) + \epsilon_{k,m}, ~\forall k , m 
\end{align*}
where $\bV_{k}$ denotes a submatrix of $\bV$ formed by extracting the rows and columns corresponding to buses in area $k$; and likewise for each $\bH_{k,m}$. Due to the overlap among the subsets of buses, the outer-product $\bV_{k}$ of area $k$ overlaps also with $\bV_{l}$ for each neighboring area $l\in \cN_k$, as shown in Fig.~\ref{fig:ieee14}.

By reducing the measurements at area $k$ to submatrix $\bV_k$, one can define the PSSE error cost $f_k(\bV_{k}) : = \sum_{m=1}^{M_k} \left[z_{k,m} - \trace(\bH_{k,m} \bV_{k}) \right]^2$ per area $k$, which only involves the local matrix $\bV_{k}$. Hence, the centralized PSSE problem in \eqref{SE_sdp} becomes equivalent to
\begin{equation}
\hat{\bV} = \arg \min_{\bV \succeq \bzero} ~ \sum_{k=1}^K f_k(\bV_{k}). \label{cse}
 \end{equation}
This equivalent formulation effectively expresses the overall PSSE cost as the superposition of each local cost $f_k$. Nonetheless, even with such a decomposable cost, the main challenge to implement \eqref{cse} in a distributed manner lies in the PSD constraint that couples the overlapping local matrices $\{\bV_{k}\}$ (cf. Fig. \ref{fig:ieee14}). If all submatrices $\{\bV_{k}\}$ were non-overlapping, the cost would be decomposable as in \eqref{cse}, and the PSD of $\bV$ would boil down to a PSD constraint per area $k$, as in
\begin{equation}\label{cse2}
\hat{\bV} = \arg \min_{\{\bV_{k} \succeq \bzero\}} ~\!\sum_{k=1}^K f_k(\bV_{k}).
\end{equation}
Similar to PSSE for linearized measurements in \eqref{eq:d_problem}, the formulation in \eqref{cse2} can be decomposed into sub-problems,  thanks to the separable PSD constraints. It is not always equivalent to the centralized \eqref{cse} though, because the PSD property of all submatrices does not necessarily lead to a PSD overall matrix. Nonetheless, the decomposable problem \eqref{cse2} is still a valid SDR-PSSE reformulation, since with the additional per-area constraints $\rank(\bV_{k})=1$, it is actually equivalent to \eqref{SE_sdpo}. While it is totally legitimate to use \eqref{cse2} as the relaxed SDP formulation for \eqref{SE_sdpo}, the two relaxed problems are actually equivalent under mild conditions.

The fresh idea here is to explore valid network topologies to facilitate such PSD constraint decomposition. To this end, it will be instrumental to leverage results on completing partial Hermitian matrices to obtain PSD ones~\cite{grone_psd84}. Upon obtaining the underlying graph formed by the specified entries in the partial Hermitian matrices, these results rely on the so-termed graph \emph{chordal} property to establish the equivalence between the positive semidefiniteness of the overall matrix and that of all submatrices corresponding to the graph's maximal cliques. Interestingly, this technique was recently used for developing distributed SDP-based optimal power flow (OPF) solvers in \cite{jabr_tps12,edhzgg_tsg13,lam_dopt11}. 

Construct first a new graph $\cB'$ over $\cB$, with all its edges corresponding to the entries in $\{\bV_{k}\}$. The graph $\cG'$ amounts to having all buses within each subset $\cN_{k}$ to form a clique. Furthermore, the following are assumed:

{\it
\begin{enumerate}
\itemsep -2pt
\item
[$(\mathbf{as4})$] The graph with all the control areas as nodes, and
their edges defined by the neighborhood subset $\{\cN_k\}_{k=1}^K$ forms a tree.
\item
[$(\mathbf{as5})$] Each control area has at least one bus that does
not overlap with any neighboring area.
\end{enumerate}
}
\begin{proposition}
\label{prop:equivalence} Under (as4)-(as5), the two relaxed problems \eqref{cse} and \eqref{cse2} are equivalent.
\end{proposition}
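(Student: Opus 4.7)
The plan is to reduce Proposition~\ref{prop:equivalence} to a classical result on positive semidefinite (PSD) completion of partial Hermitian matrices whose specified entries form a chordal graph~\cite{grone_psd84}. One direction is immediate: every feasible $\bV\succeq\bzero$ of \eqref{cse} yields per-area principal submatrices $\bV_k\succeq\bzero$, so the optimum of \eqref{cse2} is no larger than that of \eqref{cse}. The non-trivial direction is to show that, conversely, any optimizer $\{\bV_k^\star\}$ of \eqref{cse2} can be extended (``completed'') to some global $\bV\succeq\bzero$ whose principal submatrices match the $\bV_k^\star$ on the specified index sets. Because the cost $\sum_k f_k(\bV_k)$ depends only on these submatrices, the completion achieves the same objective and thus establishes equality of the two optimal values.

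The first substantive step is to identify the underlying sparsity graph. I would construct $\cG'=(\cB,\cE')$ by placing an edge between any two buses that both belong to the same area; the specified entries of the partial matrix obtained from $\{\bV_k\}$ are exactly those indexed by $\cE'$ (together with the diagonal). By construction, each area $\cN_k$ is a clique of $\cG'$. Under (as5), each area owns at least one private bus, hence no $\cN_k$ is contained in any other $\cN_l$, so the $\cN_k$'s are precisely the \emph{maximal} cliques of $\cG'$. Under (as4), the intersection structure of these cliques, indexed by the area adjacency, is a tree, which means the $\cN_k$'s form a clique tree (junction tree) of $\cG'$. It is a standard fact that a graph admitting such a clique tree is chordal.

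With chordality in hand, I would invoke the Grone--Johnson--Sá--Wolkowicz theorem: a partial Hermitian matrix whose specified entries form a chordal graph admits a PSD completion if and only if every principal submatrix corresponding to a maximal clique is PSD. Applying this to our partial matrix, whose maximal-clique submatrices are precisely the $\bV_k^\star\succeq\bzero$, yields the desired $\bV\succeq\bzero$. Because each $\bH_{k,m}$ is supported inside the block indexed by $\cN_k$, we have $\trace(\bH_{k,m}\bV)=\trace(\bH_{k,m}\bV_k^\star)$, so the objective of \eqref{cse} evaluated at $\bV$ equals that of \eqref{cse2} at $\{\bV_k^\star\}$. Combined with the trivial direction, this proves the two problems share the same optimal value and that optimizers of one map to optimizers of the other.

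The main technical hurdle I anticipate is the verification that (as4) together with the clique structure induced by the $\cN_k$'s is sufficient to guarantee chordality of $\cG'$. A clean way is to build a tree decomposition of $\cG'$ whose bags are exactly the $\cN_k$'s, using the area tree from (as4) as the underlying tree, and to check the running-intersection property: since every bus appears in at most two areas (or, more generally, in a connected subtree of areas by how overlaps are defined in this setup), the standard equivalence between ``admits a tree decomposition whose bags are the maximal cliques'' and ``chordal'' kicks in. Once that structural lemma is locked down, the PSD-completion theorem delivers the rest of the proof with essentially no further computation.
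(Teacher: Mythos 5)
Your proposal follows essentially the same route as the paper, which sketches the proof by deferring to the chordal PSD matrix completion result of \cite{grone_psd84} and the arguments in \cite{jstsp2014zhu}: the areas form the maximal cliques of a chordal sparsity graph (maximality from (as5), chordality via the clique tree from (as4)), so the per-area PSD submatrices can be completed to a global PSD matrix without changing the cost. Your write-up simply makes explicit the steps the paper leaves implicit, and the reasoning is sound.
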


\begin{figure}[t]
	\begin{subfigure}
		\centering
		\includegraphics[width=66mm,scale=2]{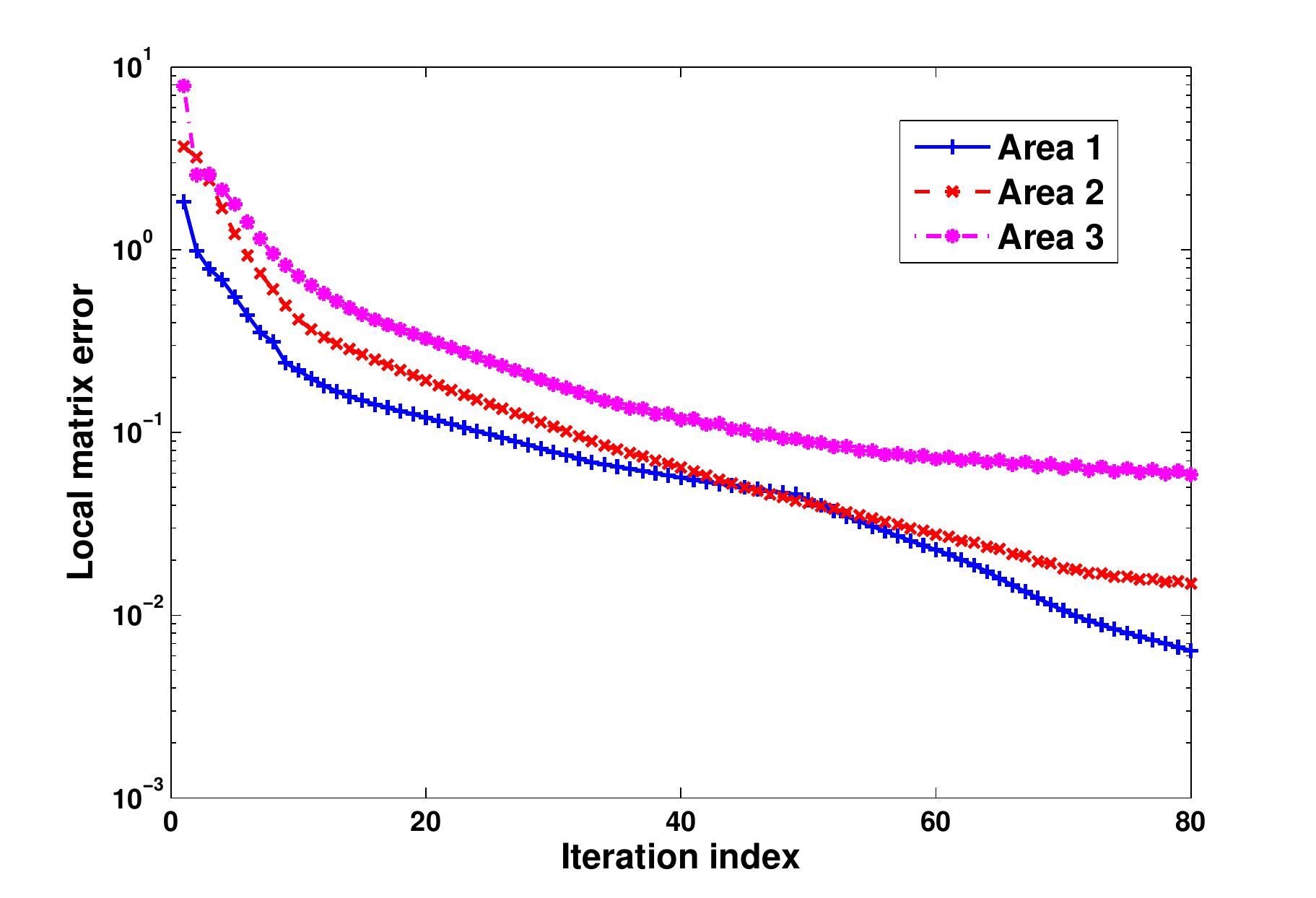}
	\end{subfigure}
	\begin{subfigure}
		\centering
		\includegraphics[width=64mm,scale=2]{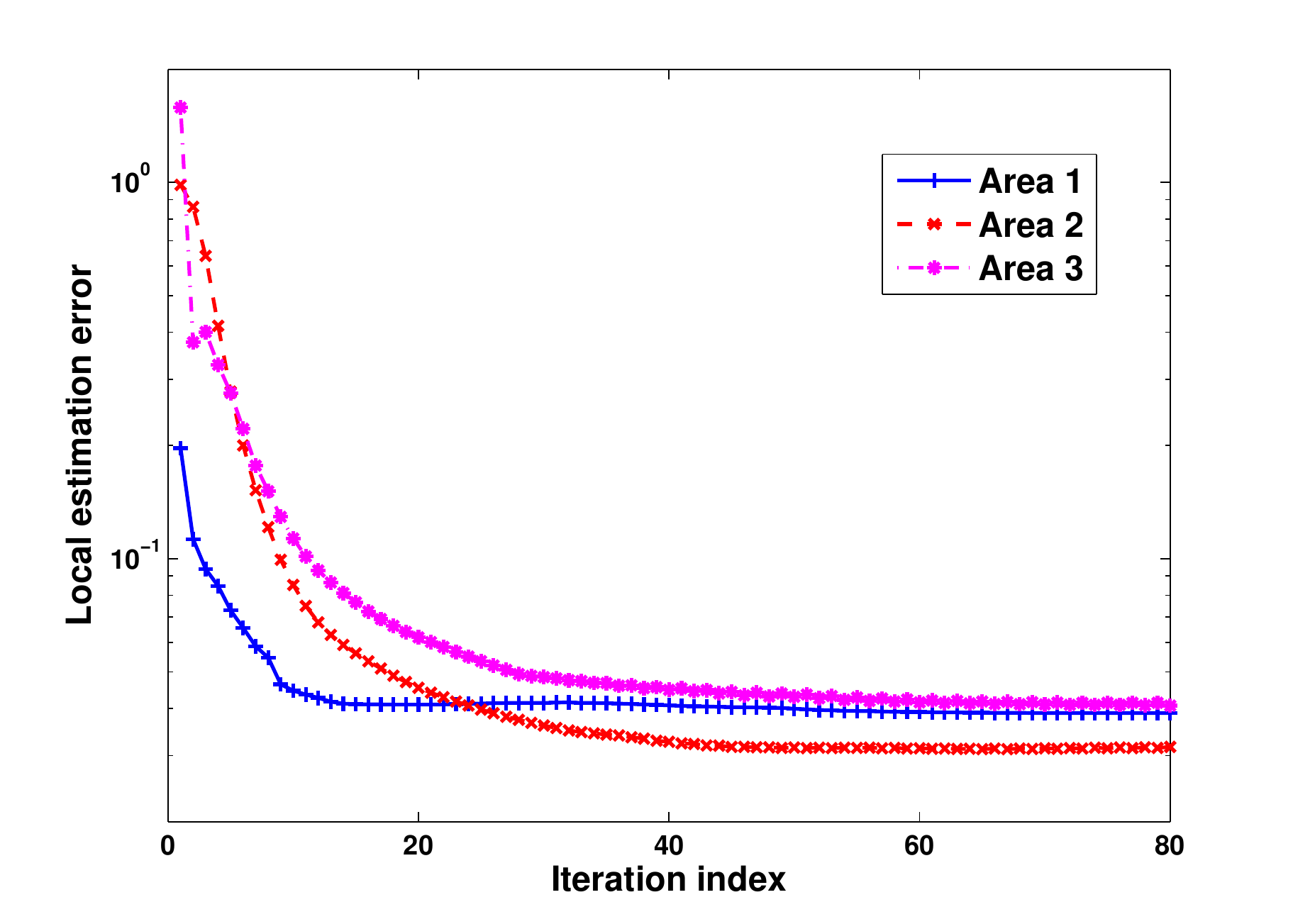}
	\end{subfigure}
	\hspace{-1em}
	\caption{(Left) Per area state matrix  error and (Right) state vector estimation error, versus the number of ADMM iterations for the distributed SDR-PSSE solver using the IEEE 118-bus system.
	}\label{fig:dsdr}
\end{figure}

Proposition~\ref{prop:equivalence} can be proved by following the arguments in \cite{jstsp2014zhu} to show that the entire PSD matrix $\bV$ can be ``completed'' using only the PSD submatrices $\bV_{k}$. The key point is that in most power networks even those not obeying {\bf (as4)} and {\bf (as5)}, \eqref{cse2} can achieve the same accuracy as the centralized one. At the same time, decomposing the PSD constraint in \eqref{cse2} is of paramount importance for developing distributed solvers. One can adopt the consensus reformulation to design the distributed solver for \eqref{cse2} as in \eqref{eq:d_problem} of Section~\ref{subsec:distsolv:linear}. Accordingly, the ADMM iterations can be employed to solve \eqref{cse2} through iterative information exchanges among neighboring areas, and this is the basis of the distributed SDR-PSSE method.

This distributed SDR-PSSE method was tested on the IEEE 118-bus system using the three-area partition in \cite{KeGi12}. All three areas measure their local bus voltage magnitudes, as well as real and reactive power flow levels at all lines. The overlaps among the three areas form a tree communication graph used to construct the area-coupling constraints. To demonstrate convergence of the ADMM iterations to the centralized
SE solution $\hat{\bV}$ of \eqref{cse2}, the local matrix Frobenius error norm $\|\hat{\bV}_{(k)}^i-\hat{\bV}_{(k)}\|_F$ is plotted versus the iteration index $i$ in the left panel of Fig. \ref{fig:dsdr} for every control area $k$.
Clearly, all local iterates converge to (approximately with a linear rate) their counterparts in the centralized solution. As the task of interest is to estimate the voltages, the local estimation error for the state vector
$\|\hat{\bv}^i_{(k)} - \bv_{(k)}\|_2$ is also depicted in the right panel of Fig.
\ref{fig:dsdr}, where $\hat{\bv}_{(k)}$ is the estimate of bus voltages
at area $k$ obtained from the iterate $\hat{\bV}_{(k)}^i$ using the
eigen-decomposition method. Interestingly, the estimation error
costs converge within the estimation accuracy of around $10^{-2}$ after about 20 iterations (less than 10 iterations for area 1),
even though the local matrix has not yet converged. In addition, these error costs decrease much faster in the first couple of iterations. This demonstrates that even with only a limited number of iterations, the PSSE accuracy can be greatly boosted in practice, which in turn makes inter-area communication overhead more affordable.

%%%%%%%%%%%%%%%%%% sec:robust  %%%%%%%%%%%%%%%%%%%%%
\section{Robust Estimators and Cyber Attacks}\label{sec:robust}
Bad data, also known as \emph{outliers} in the statistics parlance, can challenge PSSE due to communication delays, instrument mis-calibration, and/or line parameter uncertainty. In today's cyber-enabled power systems, smart meter and synchrophasor data could be also purposefully manipulated to mislead system operators. This section reviews conventional and contemporary approaches to coping with outliers.

%------------------- sec:robust:tests ----------------------
\subsection{Bad Data Detection and Identification}\label{subsec:robust:tests}
Bad data processing in PSSE relies mainly on the linear measurement model $\bz=\bH\bv+\bepsilon$, where $\bH\in\mathbb{R}^{M\times N}$. Recall that this model is exact for PMU measurements~[cf.~\eqref{eq:WLS2.5}--\eqref{eq:WLS2.6}], but approximate per Gauss-Newton iteration or under the linearized grid model. In addition, the aforementioned model assumes real-valued states and measurements, slightly abusing the symbols introduced in \eqref{eq:WLS2.6}. This is to keep the notation uncluttered and cover both cases of exact and inexact grid models. Albeit the nominal measurement noise vector is henceforth assumed zero-mean with identity covariance, results extend to colored noise as per \eqref{eq:prewhitened}.

To capture bad data, the measurement model is now augmented as
\begin{equation}\label{eq:model+o}
\bz=\bH\bv + \bo+\bepsilon
\end{equation}
where $\bo\in\mathbb{R}^M$ is an unknown vector whose $m$-th entry $o_m$ is deterministically non-zero only if $z_m$ is a bad datum \cite{VKGG11,Kosut11,DuanYangScharf11}. Therefore, vector $\bo$ is \emph{sparse}, i.e., many of its entries are zero. Under this outlier-cognizant model in \eqref{eq:model+o}, the unconstrained LSE given as $\hat{\bv}_\text{LSE}=(\bH^{\cT}\bH)^{-1}\bH^{\cT}\bz$, yields the residual error
\begin{equation}\label{eq:residual}
\br:=\bz-\bH\hat{\bv}_\text{LSE}=\bP\bz=\bP(\bo+ \bepsilon)
\end{equation}
with $\bP:=\bI_M - \bH(\bH^{\cT}\bH)^{-1}\bH^{\cT}$ being the so-called projection matrix onto the orthogonal subspace of $\range(\bH)$. The last equality in \eqref{eq:residual} stems from the fact that $\bP\bH=\bzero$. As a projection matrix, $\bP$ is idempotent, that is, $\mathbf{P}=\mathbf{P}^2$; Hermitian PSD with $(M-N)$ eigenvalues equal to one and $N$ zero eigenvalues; while its diagonal entries satisfy $\bP_{m,m}\in[0,1]$ for $m=1,\,\ldots,\,M$; see e.g., \cite{CelikAbur92}. 

For $\bepsilon\sim \mathcal{N}(\bzero,\bI_M)$, it apparently holds that $\bP\bepsilon\sim\cN(\bzero,\bP)$. The mean-squared residual error is (see also \cite{Kosut11} for its Bayesian counterpart)
\begin{equation}\label{eq:mlse}
\mathbb{E}[\|\br\|_2^2]=\mathbb{E}[\|\bP\bepsilon\|_2^2] + \|\bP\bo\|_2^2=(M-N)+\|\bP\bo\|_2^2.
\end{equation}
In the absence of bad data, or if $\bo\in \range(\bH)$, the squared residual error follows a $\chi^2$ distribution with mean $(M-N)$. The $\chi^2$-test compares $\|\br\|_2^2$ against a threshold to detect the presence of bad data \cite{AburExpositoBook,Mo00}. 

Finding both $\bv$ and $\bo$ from measurements in \eqref{eq:model+o} may seem impossible, given that the number of unknowns exceeds the number of equations. Leveraging the sparsity of $\mathbf{o}$ though, interesting results can be obtained \cite{VKGG11}. If $\tau_0$ bad data are expected, one would ideally wish to solve
\begin{equation}\label{eq:ell_0}
\{\hat{\bv},\hat{\bo}\}\in\arg\min_{\bv,\;\bo}~\Big\{ \frac{1}{2}\left\|\bz-\bH\bv-\bo\right\|_2^2:~\|\bo\|_0\leq \tau_0\Big\}.
\end{equation}
But the $\ell_0$-(pseudo) norm $\|\bo\|_0$ counting the number of non-zero entries of $\bo$, renders \eqref{eq:ell_0} NP-hard in general; see also Definition~\ref{def:id} later in Section \ref{subsec:robust:attacks}.

For the special case of $\tau_0=1$, problem \eqref{eq:ell_0} can be efficiently handled. Consider the scenario where the only non-zero entry of $\hbo$ is the $m$-th one, and denote the related $\hbv$ minimizer by $\hbv_{(m)}$. Apparently, the $m$-th entry of the $\hbo$ minimizer is $\hat{o}_m:=z_m-\bh_m^{\cT}\hbv_{(m)}$. This choice nulls the $m$-th residual $(z_m-\bh_m^{\cT}\hbv_{(m)}-\hat{o}_m=0)$. With the $m$-th residual zeroed, the cost in \eqref{eq:ell_0} becomes $\|\br_{(m)}\|_2^2:=\|\bz_{(m)}-\bH_{(m)}\hbv_{(m)}\|_2^2$, where $\bz_{(m)}$ is obtained from $\bz$ upon dropping its $m$-th entry and $\bH_{(m)}$ by removing the $m$-th row of $\bH$. The problem in \eqref{eq:ell_0} is then equivalent to
\begin{equation}\label{eq:ell_01}
\underset{m}{\minimize}~\frac{1}{2}\|\br_{(m)}\|_2^2.
\end{equation}
Problem \eqref{eq:ell_01} can be solved by exhaustively finding all $M$ LSEs excluding one measurement at a time. Fortunately, a classical result from the adaptive filtering literature relates the error $\|\br_{(m)}\|_2^2$ to the error attained using all outlier-free measurements $\|\br\|_2^2:=\|\bP\bz\|_2^2$; see e.g.,~\cite[Ch.~9]{Haykin}
\begin{equation}\label{eq:MSEs}
\|\br\|_2^2=\|\br_{(m)}\|_2^2+r_m \hat{o}_m.
\end{equation}
The same result links the \emph{a-posteriori} error $r_m$ to the \emph{a-priori} error $\hat{o}_m$ as $r_m=\bP_{m,m} \hat{o}_m$. Through these links, solving \eqref{eq:ell_01} is equivalent to: 
\begin{equation}\label{eq:LNR}
r_{\max}:=\underset{m}{\maximize}~\frac{|r_m|}{\sqrt{\bP_{m,m}}}.
\end{equation}
In words, a single bad datum can be identified by properly normalizing the entries of the original residual vector $\br=\bP\bz$.

Interestingly, the task in \eqref{eq:LNR} coincides with the largest normalized residual (LNR) test that compares $r_{\max}$ to a prescribed threshold to identify a single bad datum~\cite[Sec.~5.7]{AburExpositoBook}. The threshold is derived after recognizing that in the absence of bad data, $r_m/\sqrt{\bP_{m,m}}$ is standard normal for all $m$.

The LNR test does not generalize for multiple bad data and problem \eqref{eq:ell_0} becomes computationally intractable for larger $\tau_0$'s. Heuristically, if a measurement is deemed as outlying, PSSE is repeated after discarding this bad datum, the LNR test is re-applied, and the process iterates till no corrupted data are identified. Alternatively, the \emph{least-median squares} and the \emph{least-trimmed squares} estimators have provable breakdown points and superior efficiency under Gaussian data; see e.g., \cite{Mili94} and references therein. Nevertheless, their complexity scales unfavorably with the network size.

Leveraging compressed sensing~\cite{ChDoSa98}, a practical robust estimator can be found if the $\ell_0$-pseudonorm is surrogated by the convex $\ell_1$-norm as~\cite{VKGG11,KeGi12}
\begin{align}\label{eq:ell_1_c}
\underset{\bv,\;\bo}{\minimize}~\Big\{ \frac{1}{2}\left\|\bz-\bH\bv-\bo\right\|_2^2:~\|\bo\|_1\leq \tau_1\Big\}
\end{align}
for a preselected constant $\tau_1>0$, or in its Lagrangian form
\begin{align}\label{eq:ell_1}
\{\hbv,\hbo\}\in\arg\min_{\bv,\;\bo}~ \frac{1}{2}\left\|\bz-\bH\bv-\bo\right\|_2^2 + \lambda\|\bo\|_1
\end{align}
for some tradeoff parameter $\lambda>0$. The estimates of \eqref{eq:ell_1} offer joint state estimation and bad data identification. Even when some measurements are deemed as corrupted, their effect has been already suppressed. The optimization task in \eqref{eq:ell_1} can be handled by off-the-shelf software or solvers customized to the compressed sensing setup. When $\lambda\rightarrow \infty$, the minimizer $\hbo$ becomes zero, and thus $\hbv$ reduces to the LSE. On the contrary, by letting $\lambda \rightarrow 0^{+}$, the solution $\hbv$ coincides with the \emph{least-absolute value} (LAV) estimator~\cite{1982baddata,CelikAbur92,ElKeib92,2017wgc}; presented earlier in \eqref{eq:pPSSE1}, namely
\begin{equation}\label{eqLLAV}
\hat{\bv}_{\rm LAV}:=\arg\min_{\bv}~\|\bz-\bH\bv\|_1.
\end{equation}

For finite $\lambda>0$, the $\hbv$ minimizer of \eqref{eq:ell_1} is equivalent to Huber's M-estimator; see \cite{VKGG11} and references therein. Based on this connection and for Gaussian $\bepsilon$, parameter $\lambda$ can be set to 1.34, which makes the estimator 95\% asymptotically efficient for outlier-free measurements~\cite[p.~26]{MaMaYo06}. Huber's estimate can be alternatively expressed as the $\bv$-minimizer of \cite{MaMu00}, $\minimize_{\bv,\,\bomega} \,\tfrac{1}{2}\|\bomega\|_2^2 + \lambda \|\bv-\bH\bv-\bomega\|_1$. The bad data identification performance of this minimization has been analyzed in~\cite{tsp2013xumeng}.

Table~\ref{tbl:outliers} compares several bad data analysis methods on the IEEE 14-bus grid of Fig.~\ref{fig:ieee14} under the next four scenarios: \textbf{(S0)} no bad data; \textbf{(S1)} bad data on line $(4,7)$; \textbf{(S2)} bad data on line current $(4,7)$ and bus voltage $5$; and \textbf{(S3)} bad data on bus voltage $5$ and line currents $(4,7)$ and $(10,11)$. In all scenarios, bad data are simulated by multiplying the real and imaginary parts of the actual measurement by $1.2$. The performance metric here is the $\ell_2$-norm between the true state and the PSSE, which is averaged over 1,000 Monte Carlo runs. Four algorithms were tested: (a) an ideal but practically infeasible genie-aided LSE (GA-LSE), which ignores the corrupted measurements; (b) the regular LSE; (c) the LNR test-based (LNRT) estimator with the test threshold set to 3.0 \cite{AburExpositoBook}; and (d) Huber's estimator of \eqref{eq:ell_1} with $\lambda=1.34$. For (S0)-(S1), the estimators perform comparably. The few corrupted measurements in (S2)-(S3) can deteriorate LSE's performance, while Huber's estimator performs slightly better than LNRT. Computationally, Huber's estimator was run within 1.3~msec, while the LNRT required 1.5~msec. The computing times were also measured for the IEEE 118-bus grid without corrupted data. Interestingly, the average time on the IEEE 118-bus grid without corrupted data are 3.2~msec and 81~msec, respectively.

\begin{table}
\renewcommand{\arraystretch}{1.2}
\caption{Mean-Square Estimation Error in the Presence of Bad Data}
\vspace*{.5em}
\centering
\begin{tabular}{|c|r|r|r|r|r|}
\hline
\textbf{Method} & GA-LSE & LSE & LNRT & Huber's \\
\hline
(S0)			& $0.0278$	& $0.0278$	& $0.0286$	& $0.0281$\\
(S1)			& $0.0313$	& $0.0318$	& $0.0331$	& $0.0322$\\
(S2)			& $0.0336$	& $0.1431$	& $0.0404$	& $0.0390$\\
(S3)			& $0.0367$	& $0.1434$	& $0.0407$	& $0.0390$\\
\hline
\end{tabular}
\label{tbl:outliers}
\end{table}

Towards a robust decentralized state estimator, the ADMM-based framework of Section~\ref{subsec:distsolv:linear} can be engaged here too. If the measurement model for the $k$-th area is $\bz_k=\bH_k\bv_k+\bo_k+\bepsilon_k$, the centralized problem boils down to
\begin{align}\label{eq:Drobust}
\underset{\{\bv_k\in\mathcal{X}_k,\;\bo_k\}}{\minimize}~ \sum_{k=1}^K\frac{1}{2}\left\|\bz_k-\bH_k\bv_k-\bo_k\right\|_2^2 + \lambda\|\bo_k\|_1.
\end{align}
To allow for decentralized implementation, the optimization in \eqref{eq:Drobust} can be reformulated as
\begin{subequations}\label{eq:Drobust2}
\begin{align}
\minimize~&~\sum_{k=1}^K \frac{1}{2}\left\|\bz_k-\bH_k\bv_k-\bo_k\right\|_2^2 + \lambda\|\bomega_k\|_1\label{eq:Drobust2:cost}\\
\textrm{over}~&~\{\bv_k\in\mathcal{X}_k,\;\bo_k,\;\bomega_k\},\{\bv_{kl}\}\label{eq:Drobust2:vars}\\
\subjectto ~&~ \bv_{k}[l]=\bv_{kl},~\textrm{for all}~l\in \cB_k,~k=1,\ldots,K.\label{eq:Drobust2:con1}\\
~&~ \bo_k=\bomega_k,~\textrm{for all}~k=1,\ldots,K.\label{eq:Drobust2:con2}
\end{align}
\end{subequations}
As in Section~\ref{subsec:distsolv:linear}, the constraints in \eqref{eq:Drobust2:con1} and the auxiliary variables $\{\bv_{kl}\}$ enforce consensus of shared states. On the other hand, the variables $\{\bo_k\}$ are duplicated as $\{\bomega_k\}$ in \eqref{eq:Drobust2:con2}. Then, variables $\{\bv_k,\,\bo_k\}$ are put together in the $x$-update of ADMM in \eqref{eq:ADMM2:a}, whereas $\{\bv_{kl},\,\bomega_k\}$ fall into the $z$-update in \eqref{eq:ADMM2:b}. In this fashion, costs are separable over variable groups, and the minimization involving the $\ell_1$-norm enjoys a closed-form solution expressed in terms of the soft thresholding operator~\cite{KeGi12}.

%------------------- sec:robust:attacks ----------------------
\subsection{Observability and Cyber Attacks}\label{subsec:robust:attacks}
In the cyber-physical smart grid context, bad data are not simply unintentional errors, but can also take the form of malicious data injections~\cite{SecurityPROC12}. Amid these challenges, the intertwined issues of critical measurements and stealth cyber-attacks on PSSE are discussed next.

It has been tacitly assumed so far that the power system is \emph{observable}. A power system is observable if distinct states $\bv\neq \bv'$ are mapped to distinct measurements $\bh(\bv)\neq\bh(\bv')$ under a noiseless setup. Equivalently, if the so-called \emph{measurement distance function} is defined as~\cite{robust2012zhu}
\begin{equation}\label{eq:mdf}
D(\bh):=\underset{\bv\neq \bv'}{\minimize}~\|\bh(\bv)-\bh(\bv')\|_0
\end{equation}
the power system is observable if and only if $D(\bh)\geq 1$. Given the network topology and the mapping $\bh(\bv)$, the well-studied topic of observability analysis aims at determining whether the system state is uniquely identifiable, at least locally in a neighborhood of the current estimate~\cite[Ch.~4]{AburExpositoBook}. If not, mapping \emph{observable islands}, meaning maximally connected sub-grids with observable internal flows, is important as well. 

Observability analysis relies on the decoupled linearized grid model, and is accomplished through topological or numerical tests~\cite{ClKrDa83,MoWu85A}. Apparently, under the linear or linearized model $\bh(\bv)=\bH\bv$, the state $\bv$ is uniquely identifiable if and only if $\bH$ is full column-rank. Phase shift ambiguities can be waived by fixing the angle at a reference bus. 

In the presence of bad data and/or cyber attacks, observability analysis may not suffice. Consider the noiseless measurement model $\bz =\bh(\bv)+\bo$, where the non-zero entries of vector $\bo$ correspond to bad data or compromised meters; and let us proceed with the following definitions.

\begin{definition}[Observable attack~\cite{robust2012zhu}]\label{def:obs}
The attack vector $\bo$ is deemed as observable if for every state $\bv$ there is no $\bv'\neq \bv$, such that $\bh(\bv)+\bo=\bh(\bv')$. 
\end{definition}

\begin{definition}[Identifiable attack~\cite{robust2012zhu}] \label{def:id}
The attack vector $\bo$ is identifiable if for every $\bv$ there is no $(\bv',\bo')$ with $\bv'\neq \bv$ and $\|\bo'\|_0\leq \|\bo\|_0$, such that $\bh(\bv)+\bo=\bh(\bv')+\bo'$. 
\end{definition}

If the outlier vector $\bo$ is observable, the operator can tell that the collected measurements do not correspond to a system state, and can hence decide that an attack has been launched. Nevertheless, the attacked meters can be pinpointed only under the stronger conditions of Definition~\ref{def:id}. 

The resilience of the measurement mapping $\bh(\bv)$ against attacks can be characterized through $D(\bh)$ in \eqref{eq:mdf}: The maximum number of counterfeited meters for an attack to be observable is $K_o=D(\bh)-1$ and to be identifiable, it is $K_i=\lfloor \tfrac{D(\bh)-1}{2}\rfloor$; see \cite{robust2012zhu,tsp2013xumeng}. Here, the floor function $\lfloor x \rfloor$ returns the greatest integer less than or equal to $x$. 

Consider the linear mapping $\bh(\bv)=\bH\bv$. Measurement $m$ is termed \emph{critical} if once removed from the measurement set, it renders the power system non-identifiable. In other words, although $\bH$ is full column-rank, its submatrix $\bH_{(m)}$ is not. It trivially follows that $D(\bh)=1$, and the system operator can be arbitrarily misled even if only measurement $m$ is attacked. Due to the typically sparse structure of $\bH$, critical measurements or multiple simultaneously corrupted data do exist~\cite{AburExpositoBook}. It was pointed out in \cite{LiNi11} that if an attack $\bo$ can be constructed to lie in the $\range(\bH)$, it comprises a `stealth attack.' Although finding $D(\bh)$ is not trivial in general, a polynomial-time algorithm leveraging a graph-theoretic approach is devised in~\cite{Kosut11}.

%%%%%%%%%%%%%%%%%% sec:dynamic  %%%%%%%%%%%%%%%%%%%%%
\section{Power System State Tracking}\label{sec:dynamic}
The PSSE methods reviewed so far ignore system dynamics and do not exploit historical information. Dynamic PSSE is well motivated thanks to its improved robustness, observability, and predictive ability when additional temporal information is available~\cite{huang2002dynamic}. Recently proposed model-free and model-based state tracking schemes are outlined next.

%------------------- sec:dynamic:model-free ----------------------
\subsection{Model-free State Tracking via Online Learning}\label{subsec:dynamic:model-free}
In complex future power systems, one may not choose to explicitly commit to a model for the underlying system dynamics. The framework of online convex optimization (OCO), particularly popular in machine learning, can account for unmodeled dynamics and is thus briefly presented next~\cite{OCO}.

The OCO model considers a multi-stage game between a player and an adversary. In the PSSE context, the utility or the system operator assumes the role of the player, while the loads and renewable generations can be viewed as the adversary. At time $t$, the player first selects an action $\bV_t$ from a given action set $\mathcal{V}$, and the adversary subsequently reveals a convex loss function $f_t:\mathcal{V}\rightarrow\mathbb{R}$. In this round, the player suffers a loss $f_t(\mathbf{V}_t)$. The ultimate goal for the player is to minimize the \emph{regret} $R_f(T)$ over $T$ rounds:
\begin{equation}\label{eq:regret}
R_f(T):=\sum_{t=1}^T f_t(\mathbf{V}_t)-\underset{\mathbf{V}\in\mathcal{V}}{\minimize}\;\sum_{t=1}^Tf_t(\mathbf{V}).
\end{equation}    
The regret is basically the accumulated cost incurred by the player relative to that by a single fixed action $\mathbf{V}^0:=\arg\min_{\mathbf{V}\in\mathcal{V}}\sum_{t=1}^Tf_t(\mathbf{V})$. This fixed action is selected with the advantage of knowing the loss functions $\{f_t\}_{t=1}^T$ in hindsight. Under appropriate conditions, judiciously designed online optimization algorithms can achieve sublinear regret; that is, $R_f(T)/T\to 0$ as $T\to+\infty$.  

Building on the SDR-PSSE formulation of Section~\ref{subsec:problem:SDR}, the ensuing method considers streaming data for real-time PSSE. The data referring to and collected over the control period $t$ are $\{(z_{m_t};\mathbf{H}_{m_t})\}_{m_t=1}^{M_t}$ with $t=1,\ldots,T$. The number and type of measurements can change over time, while the matrix corresponding to measurement $m$ may change over time as indicated by $\{\mathbf{H}_{m_t}\}_{m_t=1}^{M_t}$ due to topology reconfigurations. The online PSSE task can be now formulated as
\begin{equation}\label{eq:opsse}
\underset{\mathbf{V}\succeq \mathbf{0}}{\minimize}~\sum_{t=1}^T f_t(\mathbf{V})
\end{equation}
where $f_t(\mathbf{V}):=\sum_{m_t=1}^{M_t}[z_{m_t}-\trace(\mathbf{H}_{m_t}\mathbf{V})]^2$. Online PSSE aims at improving the static estimates by capitalizing on previous measurements as well as tracking slow time-varying variations in generation and demand.

Minimizing the cost in \eqref{eq:opsse} may be computationally cumbersome for real-time implementation. An efficient alternative based on online gradient descent amounts to iteratively minimizing a regularized first-order approximation of the instantaneous cost instead~\cite{icassp2014skgwgg}
\begin{equation}\label{eq:ocoupdate}
\mathbf{V}_{t+1}:=\arg\min_{\mathbf{V}\succeq \mathbf{0}}~\trace(\mathbf{V}^\cH\nabla f_t(\mathbf{V}_t))+\frac{1}{2\mu_t}\left\|\mathbf{V}-\mathbf{V}_t\right\|_F^2
\end{equation}
for $t=1,\ldots$, and suitably selected step sizes $\mu_t>0$. 
Interestingly, 
the optimization in \eqref{eq:ocoupdate} admits a closed-form solution given by
\begin{equation}\label{eq:onpsse}
\mathbf{V}_{t+1}={\rm Proj}_{\mathbb{S}^+} [\mathbf{V}_t-\mu_t\nabla f_t(\mathbf{V}_t)]
\end{equation}
with ${\rm Proj}_{\mathbb{S}^+} $ denoting the projection onto the positive semidefinite cone, which can be performed using eigen-decomposition followed by setting negative eigenvalues to zero.    
It is worth mentioning that the online PSSE in \eqref{eq:onpsse}
enjoys sublinear regret~\cite{icassp2014skgwgg}. 
Upon finding $\mathbf{V}_t$, a state estimate $\mathbf{v}_t$ can be obtained by eigen-decomposition or randomization as in Section~\ref{subsec:problem:SDR}. With an additional nuclear-norm regularization term promoting low-rank solutions in \eqref{eq:ocoupdate}, online ADMM alternatives were devised in \cite{kim2015online}.
Interestingly, online learning tools has recently been advocated for numerous real-time energy management tasks in \cite{asilomar2014kim}, \cite{ tps2015vassilis}, \cite{tps2015wang}.

%------------------- sec:dynamic:model-based ----------------------
\subsection{Model-based State Tracking}\label{subsec:dynamic:model-based}
Although the previous model-free solver can recover slow time-varying states, model-based approaches facilitate tracking of fast time-varying system states. A typical state-space model for power system dynamics is~\cite{valverde2011unscented}
\begin{subequations}\label{eq:ssm}
	\begin{align}
	\mathbf{v}_{t+1}&=\mathbf{F}_t\mathbf{v}_t+\mathbf{g}_t+\bm{\omega}_t\label{eq:dynamic}\\
	\mathbf{z}_t&=\mathbf{h}(\mathbf{v}_t)+\bm{\epsilon}_t\label{eq:meas}
	\end{align}
\end{subequations}
where $\mathbf{F}_t$ denotes the state-transition matrix, $\mathbf{g}_t$ captures the process mismatch, 
and $\bm{w}_t$ is the additive noise. The nonlinear mapping $\mathbf{h}(\cdot)$ comes from conventional SCADA measurements. 
Values $\{(\mathbf{F}_t,\mathbf{g}_t)\}$ can be obtained in real-time using for example Holt's system identification method \cite{holt}. Two common dynamic tracking approaches to cope with the nonlinearity in the measurement model of \eqref{eq:meas} include the (extended or unscented) Kalman filters and moving horizon estimators \cite{valverde2011unscented,debs1970dynamic,huang2002dynamic,wang2012alternative,pesgm2014gwskgg}, and they are outlined in order next.

The extended Kalman filter (EKF) handles the nonlinearity by linearizing $\mathbf{h}(\mathbf{v})$ around the state predictor. To start, let $\hat{\mathbf{v}}_{t+1|t}$ stand for the predicted estimate at time $t+1$ given measurements $\{\mathbf{z}_{\tau}\}_{\tau=1}^t$ up to time $t$. Let also $\hat{\mathbf{v}}_{t+1|t+1}$ be the filtered estimate given measurements $\{\mathbf{z}_{\tau}\}_{\tau=1}^{t+1}$. If the noise terms $\bm{\omega}_t$ and $\bm{\epsilon}_t$ in \eqref{eq:ssm} are assumed zero-mean Gaussian with known covariance matrices $\mathbf{Q}_t\succeq\mathbf{0}$ and $\mathbf{R}_t\succeq\mathbf{0}$, respectively, the EKF can be implemented with the following recursions
\begin{equation}\label{eq:ekfrecursion}
\hat{\mathbf{v}}_{t+1|t+1}=\hat{\mathbf{v}}_{t+1|t}+\mathbf{K}_{t+1}\big[\mathbf{z}_{t+1}-\mathbf{h}(\hat{\mathbf{v}}_{t+1|t})
\big]
\end{equation}
where the state predictor $\hat{\mathbf{v}}_{t+1|t}$ and  
the Kalman gain $\mathbf{K}_{t+1}$ are given by
\begin{subequations}\label{eq:ekf}
	\begin{align}
	\hat{\mathbf{v}}_{t+1|t}&=\mathbf{F}_{t}\hat{\mathbf{v}}_{t|t}+\mathbf{g}_{t}\\
	\mathbf{K}_{t+1}&=\mathbf{P}_{t+1|t}\mathbf{J}_{t+1}^\mathcal{H}\big(\mathbf{J}_{t+1}\mathbf{P}_{t+1|t}\mathbf{J}_{t+1}^\mathcal{H}+\mathbf{R}_{t+1}\big)^{-1}\\
	\mathbf{P}_{t+1|t+1}&=\mathbf{P}_{t+1|t}-\mathbf{K}_{t+1}\mathbf{J}_{t+1}\mathbf{P}_{t+1|t}\\
	\mathbf{P}_{t+1|t}&=\mathbf{F}_{t}\mathbf{P}_{t|t}\mathbf{F}_{t}^\mathcal{H}+\mathbf{Q}_{t}
	\end{align}
\end{subequations}
with $\mathbf{J}_{t+1}$ being the measurement Jacobian matrix of $\mathbf{h}$
evaluated at $\hat{\mathbf{v}}_{t+1|t}$,
and $\mathbf{P}_{t+1|t+1}\succeq\mathbf{0}$ ($\mathbf{P}_{t+1|t}\succeq\mathbf{0}$)
denoting the corrected (predicted) state estimation error covariance matrix at time $t+1$. To improve on the approximation accuracy of the EKF, extended Kalman filters (UKF) have been reported in \cite{valverde2011unscented}; see also \cite{ZhaoNettoMili16} for their robust versions. Particle filtering may also be useful if its computational complexity can be supported during real-time power systems operations. 

Because the EKF and UKF are known to diverge for highly nonlinear dynamics, \emph{moving horizon estimation} (MHE) has been suggested as an accurate yet tractable alternative with proven robustness to bounded model errors~\cite{tac2003rao}. Different from Kalman filtering, the initial state $\mathbf{v}_0$, and noises $\bm{\omega}_t$ and $\bm{\epsilon}_t$ in MHE are viewed as deterministic unknowns taking values from given bounded sets $\mathcal{S}$, $\mathcal{W}$, and $\mathcal{E}$, respectively. The sets $\mathcal{W}$ and $\mathcal{E}$ model disturbances with truncated densities~\cite{tac2003rao}. 

The idea behind MHE is to perform PSSE by exploiting useful information present in a sliding window of the most recent observations. Consider here a sliding window of length $L+1$. Let $\hat{\mathbf{v}}_{t-L|t}$ denote the smoothed estimate at time $t-L$ given $L$ past measurements, as well as the current one, namely $\{\mathbf{z}_{\tau}\}_{\tau=t-L}^t$. MHE aims at obtaining the most recent $L$ state estimates $\{\hat{\mathbf{v}}_{t-L+s|t}\}_{s=0}^{L}$ based on $\{\mathbf{z}_{\tau}\}_{t-L}^t$ and the available estimate $\check{\mathbf{v}}_{t-L}:=\hat{\mathbf{v}}_{t-L|t-1}$ from time $t-1$ and for $t\ge L$. A key simplification is that once $\hat{\mathbf{v}}_{t-L|t}$ becomes available, the other $L$ recent estimates at time $t$ can be recursively obtained through `noise-free' propagation based on the dynamic model \eqref{eq:dynamic}; that is, 
\begin{equation}
\label{eq:dynamicupdate}
\hat{\mathbf{v}}_{t-L+s|t}=\mathbf{F}_{t-L+s-1}\hat{\mathbf{v}}_{t-L+s-1|t}
\end{equation}
for $s=1,\ldots,L$. By relating all recent estimates to $\hat{\mathbf{v}}_{t-L|t}$ via successive multiplications of transition matrices, the update in \eqref{eq:dynamicupdate} simplifies to
\begin{equation}
\hat{\mathbf{v}}_{t-L+s|t}=\mathbf{T}_{t-L+s}\hat{\mathbf{v}}_{t-L|t}
\end{equation}
where $\mathbf{T}_{t-L+s}:=\mathbf{F}_{t-L+s-1}\mathbf{T}_{t-L+s-1}$ for $s=1,\ldots,L$, with $\mathbf{T}_{t-L}=\mathbf{I}$. The MHE-based state estimate $\hat{\mathbf{v}}_{t-L|t}$ is then given by
\begin{equation}\label{eq:mhe}
\hat{\mathbf{v}}_{t-L|t}:=\arg\min_{\mathbf{v}}~\sum_{s=0}^L\big\|\mathbf{z}_{t-L+s}-\mathbf{h}(\mathbf{T}_{t-L+s}\mathbf{v})\big\|_2^2+\lambda\|\mathbf{v}-\check{\mathbf{v}}_{t-L}\|_2^2 
\end{equation}
where $\lambda>0$ can be tuned relying on our confidence in the state predictor $\check{\mathbf{v}}_{t-L}$, and the measurements $\{\mathbf{z}_{\tau}\}_{t-L}^t$. Given the quadratic dependence of the SCADA measurements $\{\mathbf{h}(\mathbf{v}_t)\}$ and the state $\mathbf{v}$, the optimization problem in \eqref{eq:mhe} is non-convex.

Finding the MHE-based state estimates in real time entails online solutions of dynamic optimization problems. The MHE formulation can be convexified by exploiting the semidefinite relaxation: vector $\mathbf{v}$ is lifted to the matrix $\mathbf{V}:=\mathbf{v}\mathbf{v}^\mathcal{H}\succeq\mathbf{0}$, and the $m$-th entry of $\mathbf{h}(\mathbf{v}_{t-L+s})$ for $s=0,\ldots,L$, is expressed as
\begin{equation*}
h_m(\mathbf{T}_{t-L+s}\mathbf{v})=\mathbf{v}^\mathcal{H}
\mathbf{T}_{t-L+s}^\mathcal{H}\mathbf{H}_m\mathbf{T}_{t-L+s}\mathbf{v}=  \trace(\mathbf{T}_{t-L+s}^\mathcal{H}\mathbf{H}_m\mathbf{T}_{t-L+s}\mathbf{V}).
\end{equation*}
Upon dropping the nonconvex rank constraint ${\rm rank}(\mathbf{V})=1$, the SDP-based MHE yields
\begin{equation*}%\label{eq:mhesdr}
\hat{\mathbf{V}}_{t-L|t}:=\arg\min_{\mathbf{V}\succeq\mathbf{0}}~\sum_{s=0}^L\big\|\mathbf{z}_{t-L+s}-\trace\big(\mathbf{T}_{t-L+s}^\mathcal{H}\mathbf{H}_m\mathbf{T}_{t-L+s}\mathbf{V}\big)\big\|_2^2+\lambda\|\mathbf{v}-\check{\mathbf{v}}_{t-L}\|_2^2
\end{equation*}
which can be solved in polynomial time using off-the-shelf toolboxes. Rank-one state estimates can be obtained again through eigen-decomposition or randomization. The complexity of solving the last problem is rather high in its present form on the order of $N_b^{4.5}$~\cite{LuMa10}. Therefore, developing faster solvers for the SDP-based MHE by exploiting the rich sparsity structure in $\{\mathbf{H}_m\}$ matrices is worth investigating. Decentralized and localized MHE implementations are also timely and pertinent. Devising FPP-based solvers for the MHE in \eqref{eq:mhe} constitutes another research direction.

%%%%%%%%%%%%%%%%%% sec:discussion  %%%%%%%%%%%%%%%%%%%%%
\section{Discussion}\label{sec:discussion}
This chapter has reviewed some of the recent advances in PSSE. After developing the CRLB, an SDP-based solver, and its regularized counterpart were discussed. To overcome the high complexity involved, a scheme named feasible point pursuit relying on successive convex approximations was also advocated. A decentralized PSSE paradigm put forth provides the means for coping with the computationally-intensive SDP formulations, it is tailored for the interconnected nature of modern grids, while it can also afford processing PMU data in a timely fashion. A better understanding of cyber attacks and disciplined ways for decentralized bad data processing were also provided. Finally, this chapter gave a fresh perspective to state tracking under model-free and model-based scenarios. 

Nonetheless, there are still many technically challenging and practically pertinent grid monitoring issues to be addressed. Solving power grid data processing tasks \emph{on the cloud} has been a major trend to alleviate data storage, communication, and interoperability costs for system operators and utilities. Moreover, with the current focus on low- and medium-voltage distribution grids, solvers for unbalanced and multi-phase operating conditions \cite{modeling2017gatsis} are desirable. Smart meters and synchrophasor data from distribution grids (also known as micro-PMUs \cite{micro2014}) call for new data processing solutions. Advances in machine learning and statistical signal processing, such as sparse and low-rank models, missing and incomplete data, tensor decompositions, deep learning, nonconvex and stochastic optimization tools, and (multi)kernel-based learning to name a few, are currently providing novel paths to grid monitoring tasks while realizing the vision of smarter energy systems.

\section*{Acknowledgements}
%The authors wish to thank the following friends, colleagues, and co-authors who
%contributed to their joint publications that the material of this chapter was extracted
%from: Drs. S.-J. Kim, 
G. Wang and G. B. Giannakis were supported in part by NSF grants 1423316, 1442686, 1508993, and 1509040; and by
the Laboratory Directed Research and Development Program at the NREL. 
H. Zhu was supported in part by NSF grants 1610732 and 1653706.

%%%%%%%%%%%%%%%%%%%%% APPENDIX %%%%%%%%%%%%%%%%%%%%%%%%%%%%%%%%%%%%%%
\section*{Appendix}\label{sec:appendix}
%%% Proof of Proposition 1 (CRLB)
\begin{proof}[Proof of Prop.~\ref{prop:crlb}]
	Consider the AGWN model \eqref{eq:WLS2} with $\bm{\epsilon}\sim \mathcal{N}(\mathbf{0},\diag(\{\sigma_m^2\}))$. The data likelihood function is 
	\begin{equation}\label{eq:likelihood}
	p(\mathbf{z};\mathbf{v})=\prod_{m=1}^M\frac{1}{\sqrt{2\pi \sigma_m^2}}
	\exp\!\left[-\frac{(z_m-\mathbf{v}^\mathcal{H}\mathbf{H}_m\mathbf{v})^2}{2\sigma_m^2}\right]
	\end{equation}
	and the negative log-likelihood function denoted by $f(\mathbf{v})=-\ln p(\mathbf{z};\mathbf{x})$ is
	\begin{equation}\label{eq:negative}
	f(\mathbf{v})=\sum_{m=1}^M\left[\frac{1}{2\sigma_m^2}\left(z_m-\mathbf{v}^\mathcal{H}\mathbf{H}_m\mathbf{v}\right)^2+\frac{1}{2}\ln\left(2\pi\sigma_m^2\right)\right].
	\end{equation}
	
	The Fisher information matrix (FIM) is defined as the Hessian of the real-valued function $f(\mathbf{v})$ with respect to the complex vector $\mathbf{v}\in\mathbb{C}^{N_b}$. Deriving the CRLB amounts to finding the Hessian of a real-valued function with respect to a complex-valued vector. \emph{Wirtinger's calculus} confirms that $f(\mathbf{v})$ can be equivalently rewritten as $f(\mathbf{v},{\mathbf{v}}^\ast)$; see e.g.,~\cite{wirtinger}. Upon introducing the conjugate coordinates $[\mathbf{v}^\mathcal{T}~({\mathbf{v}}^\ast)^\mathcal{T}]^\mathcal{T}\in\mathbb{C}^{2N_b}$, the \emph{Wirtinger derivatives}, namely the first-order partial differential operators of functions over complex domains, are given by \cite{wirtinger}
	\begin{subequations}
		\begin{align*}
		\frac{\partial f}{\partial \mathbf{v}}&:=\left.\frac{\partial f(\mathbf{v},{\mathbf{v}}^\ast)}{\partial\mathbf{v}^\mathcal{T}}\right|_{{{\rm constant}~\mathbf{v}}^\ast}
		=\left.\left[\frac{\partial f}{\partial \cV_1}~\cdots~\frac{\partial f}{\partial \cV_N}\right]\right|_{{{\rm constant}~\mathbf{v}}^\ast}\\
		\frac{\partial f}{\partial {\mathbf{v}}^\ast}&:=\left.\frac{\partial f(\mathbf{v},{\mathbf{v}}^\ast)}{\partial({\mathbf{v}}^\ast)^\mathcal{T}}\right|_{{\rm constant}~\mathbf{v}}=\left.\left[\frac{\partial f}{\partial \cV_1^\ast}~\cdots~\frac{\partial f}{\partial \cV_N^\ast}\right]\right|_{{{\rm constant}~\mathbf{v}}}.
		\end{align*}	
	\end{subequations}
These definitions follow the convention in multivariate calculus that derivatives are denoted by row vectors, and gradients by column vectors. Define for notational brevity $\phi_m (\mathbf{v}, {\mathbf{v}}^\ast): = z_m - ({\mathbf{v}}^\ast)^\mathcal{T}\mathbf{H}_m\mathbf{v}$ for $m=1,\ldots,M$. Accordingly, the Wirtinger derivatives of $f(\mathbf{v},\mathbf{v}^\ast)$ in \eqref{eq:negative} are obtained as
	\begin{equation}\label{eq:derivatives}
	\frac{\partial f}{\partial \mathbf{v}}  = \sum_{m = 1}^{M} \frac{1}{\sigma_m^2} \phi_m  \frac{\partial \phi_m }{\partial \mathbf{v}^\mathcal{T}}~~\text{and}~~
	\frac{\partial f}{\partial {\mathbf{v}^\ast}}=\sum_{m= 1}^{L} \frac{1}{\sigma_m^2} \phi_m \frac{\partial \phi_m }{\partial ({\mathbf{v}}^\ast)^\mathcal{T}}
	\end{equation}
	and the Wirtinger derivatives of $ \phi_m(\mathbf{v}, {\mathbf{v}}^\ast) $ can be found likewise
	\begin{align}
	\frac{\partial\phi_m}{\partial \mathbf{v}^\mathcal{T}} =-(\mathbf{H}_m \mathbf{{v}})^\mathcal{H}~~\text{and}~~
	\frac{\partial\phi_m}{\partial ({\mathbf{v}}^\ast)^\mathcal{T}} = -({\mathbf{H}}_m^\ast {\mathbf{v}}^\ast)^\mathcal{H}.\label{eq:inner}
	\end{align}
	
	In the conjugate coordinate system, the complex Hessian of $f(\bv,\bv^\ast)$ with respect to the conjugate coordinates $[\mathbf{v}^\mathcal{T}~({\mathbf{v}}^\ast)^\mathcal{T}]^\mathcal{T}$  
	is defined as
	\begin{equation}\label{eq:hessian}
	{\boldsymbol{\mathcal{H}}}(\bv,\bv^\ast):=\nabla^2 f(\mathbf{v}, {\mathbf{v}}^\ast)=
	\left[\begin{array}{cc}	{\boldsymbol{\mathcal{H}}}_{\mathbf{v}\mathbf{v}}&{\boldsymbol{\mathcal{H}}}_{{\mathbf{v}}^\ast\mathbf{v}}\\
	{\boldsymbol{\mathcal{H}}}_{\mathbf{v}{\mathbf{v}^\ast}}&{\boldsymbol{\mathcal{H}}}_{{\mathbf{v}}^\ast{\mathbf{v}}^\ast}
	\end{array}\right]
	\end{equation}
	whose blocks are given as
	\begin{align*}
		{\boldsymbol{\mathcal{H}}}_{\mathbf{v}\mathbf{v}}&:=\frac{\partial}{\partial \mathbf{v}^\mathcal{T}}\left(\frac{\partial f}{\partial \mathbf{v}}
	\right)^\mathcal{H}, \quad~~\, {\boldsymbol{\mathcal{H}}}_{{\mathbf{v}^\ast}\mathbf{v}}:=\frac{\partial}{\partial ({\mathbf{v}}^\ast)^\mathcal{T}}
	\left(\frac{\partial f}{\partial \mathbf{v}}
	\right)^\mathcal{H}\\
		{\boldsymbol{\mathcal{H}}}_{\mathbf{v}{\mathbf{v}^\ast}}&:=\frac{\partial}{\partial \mathbf{v}^\mathcal{T}}\left(\frac{\partial f}{\partial {\mathbf{v}^\ast}}
	\right)^\mathcal{H},\quad
	{\boldsymbol{\mathcal{H}}}_{{\mathbf{v}}^\ast{\mathbf{v}^\ast}}:=\frac{\partial}{\partial ({\mathbf{v}}^\ast)^\mathcal{T}}\left(\frac{\partial f}{\partial {\mathbf{v}}^\ast} \right)^\mathcal{H}.
	\end{align*}
	Substituting \eqref{eq:derivatives} and \eqref{eq:inner} into the last equations and after algebraic manipulations yields
	\begin{subequations}\label{eq:Hessianblks}
		\begin{align}
		{\boldsymbol{\mathcal{H}}}_{\mathbf{v}\mathbf{v}} 
		&= \sum_{m = 1}^{M} \sigma_m^{-2} \Big(\mathbf{H}_m \mathbf{v} (\mathbf{H}_m \mathbf{v})^\mathcal{H}- \phi_m \mathbf{H}_m\Big)\label{eq:blk11}\\
		{\boldsymbol{\mathcal{H}}}_{{\mathbf{v}}^\ast\mathbf{v}}
		&=\sum_{m = 1}^{M} \sigma_m^{-2}  \mathbf{H}_m \mathbf{v} ({\mathbf{H}}_m^\ast {\mathbf{v}}^\ast)^\mathcal{H}\label{eq:blk12}\\
		{\boldsymbol{\mathcal{H}}}_{\mathbf{v}{\mathbf{v}}^\ast}
		&= \sum_{m = 1}^{M} \sigma_m^{-2}  {\mathbf{H}}^\ast_m {\mathbf{v}}^\ast(\mathbf{H}_m \mathbf{v})^\mathcal{H}\label{eq:blk21}\\
		{\boldsymbol{\mathcal{H}}}_{{\mathbf{v}}^\ast{\mathbf{v}}^\ast}& =\sum_{m= 1}^{M} \sigma_m^{-2} \Big({\mathbf{H}}^\ast_m {\mathbf{v}}^\ast ({\mathbf{H}}^\ast_m {\mathbf{v}}^\ast)^\mathcal{H}- \phi_m  {\mathbf{H}}^\ast_m\Big)\label{eq:blk22}.
		\end{align}
	\end{subequations}
	Evaluating the Hessian blocks of \eqref{eq:Hessianblks} at the true value of $\mathbf{v}$, and taking the expectation with respect to $\bm{\epsilon}$, yields $ \mathbb{E} [ \phi_m ]=0$. Hence, the $\phi_m$-related terms in \eqref{eq:Hessianblks} disappear, and the FIM $\mathbf{F}(\bv,\bv^\ast) := \mathbb{E} [ {\boldsymbol{\mathcal{H}}}(\bv,\bv^\ast)]$ simplifies to the expression in \eqref{eq:fim}; see also~\cite{van1994cramer}.
	
	To show that the FIM is rank-deficient, define $\mathbf{g}_m:= [(\mathbf{H}_m\mathbf{v})^\mathcal{H}~~(\mathbf{H}_m^\ast\mathbf{v}^\ast)^\mathcal{H}]^\mathcal{H}$, so that the FIM becomes $\mathbf{F}= \sum_{m=1}^{M} \sigma_m^{-2}\mathbf{g}_m\mathbf{g}_m^\mathcal{H}$. Observe now that the non-zero vector $\mathbf{d}(\bv):=[\mathbf{v}^\mathcal{T}~-(\mathbf{v}^\ast)^\mathcal{T}]^\mathcal{T}$ is orthogonal to $\mathbf{g}_m$ for $m=1,\ldots,M$; that is,
	\[\mathbf{g}_m^\mathcal{H}\mathbf{d}= \mathbf{v}^\mathcal{H}\mathbf{H}_m\mathbf{v}- (\mathbf{v}^\mathcal{H}\mathbf{H}_m\mathbf{v})^\ast =0.\]
	Based on the latter, it is not hard to verify that $\mathbf{F}\mathbf{d}=\bzero$, which proves that the null space of $\mathbf{F}$ is non-empty.
\end{proof}

\bibliographystyle{IEEEtranS}
\bibliography{myabrv,used}
\end{document}